\renewcommand{\arraystretch}{1.2}
\algrenewcommand\algorithmicrequire{\textbf{Input:}}
\DeclareMathOperator{\tr}{tr}
\DeclareMathOperator{\height}{height}
\DeclareMathOperator{\Wg}{Wg}
\DeclareMathOperator{\id}{id}
\DeclareMathOperator{\swap}{\textup{SWAP}}
\DeclareMathOperator{\Span}{span}
\DeclareMathOperator{\sym}{sym}
\newcommand{\ket}[1]{\mathinner{|#1\rangle}}
\newcommand{\ot}[0]{\otimes}
\newcommand{\one}[0]{\mathds{1}}
\renewcommand{\SS}{\mathcal{S}}
\newcommand{\R}{\mathds{R}}
\newcommand{\C}{\mathds{C}}
\newcommand{\E}{x}
\newcommand{\HH}{\mathcal{H}}
\newcommand{\UU}{\mathcal{U}}
\newcommand{\cdn}[0]{(\C^d)^{\otimes n}}
\newcommand{\expect}[1]{\langle #1 \rangle}
\newcommand{\norm}[1]{\left\|\,#1\,\right\|}       
\newcommand{\enorm}[1]{\norm{#1}_{\mathrm{2}}}      
\newtheorem{theorem}{Theorem}
\newtheorem*{theorem*}{Theorem}
\newtheorem{lemma}[theorem]{Lemma}
\newtheorem{corollary}[theorem]{Corollary}
\newtheorem{remark}[theorem]{Remark}
\newtheorem*{problem*}{Problem}
\newtheorem*{question*}{Question}
\newtheorem*{result*}{Result}
\newcommand{\nn}{\nonumber}
\newcommand{\fnote}[1]{}
\newcommand{\onote}[1]{}
\newcommand{\snote}[1]{}
\newcommand{\GL}{\textup{GL}}
\begin{document}

\title
[Second order cone relaxations for quantum Max Cut]
{Second order cone relaxations for quantum Max Cut}

\date{\today}

\author{Felix Huber, Kevin Thompson, Ojas Parekh, and Sevag Gharibian}

\address{
Institute of Theoretical Physics and Astrophysics, University of Gdańsk, Poland}
\email{felix.huber@ug.edu.pl}

\address{Sandia National Laboratories, Albuquerque, U.S.A.} 
\email{kevthom@sandia.gov}

\address{Sandia National Laboratories, Albuquerque, U.S.A.} 
\email{odparek@sandia.gov} 

\address{Department of Computer Science, and Institute for Photonic Quantum Systems, Paderborn University, Germany.}
\email{sevag.gharibian@upb.de}

\thanks{We thank
Ludovic Jaubert,
Titus Neupert,
Martin Plávala,
and
Claudio Procesi
for fruitful discussions and helpful comments.  This work is supported by a collaboration between the US DOE and other Agencies. 
FH was supported by the Foundation for Polish Science
through TEAM-NET (POIR.04.04.00-00-17C1/18-00),
Agence National de la Recherche (ANR-23-CPJ1-0012-01),
and the Region Nouvelle-Aquitaine.
This material is based upon work supported by
the U.S. Department of Energy, Office of Science, Office of Advanced Scientific Computing Research, Accelerated Research in Quantum Computing, Fundamental Algorithmic Research for Quantum Computing (FAR-QC) and Fundamental Algorithmic Research toward Quantum Utility (FAR-Qu).
This article has been authored by an employee of National Technology \& Engineering Solutions of Sandia, LLC under Contract No. DE-NA0003525 with the U.S. Department of Energy (DOE). The employee owns all right, title and interest in and to the article and is solely responsible for its contents. The United States Government retains and the publisher, by accepting the article for publication, acknowledges that the United States Government retains a non-exclusive, paid-up, irrevocable, world-wide license to publish or reproduce the published form of this article or allow others to do so, for United States Government purposes. The DOE will provide public access to these results of federally sponsored research in accordance with the DOE Public Access Plan https://www.energy.gov/downloads/doe-public-access-plan.
}

\begin{abstract}
Quantum Max Cut (QMC), also known as the quantum anti-ferromagnetic Heisenberg model, is a QMA-complete problem relevant to quantum many-body physics and computer science.  
Semidefinite programming relaxations have been
fruitful in designing theoretical
approximation algorithms for QMC,
but are computationally expensive for systems beyond tens of qubits.
We give a second order cone relaxation for QMC,
which optimizes over the set of mutually consistent three-qubit reduced density matrices.
In combination with Pauli level-$1$ of the quantum Lasserre hierarchy,
the relaxation achieves an approximation ratio of $0.526$ to the ground state energy.
Our relaxation is solvable on systems with hundreds of qubits and paves the way to computationally efficient lower and upper bounds
on the ground state energy of large-scale quantum spin systems.
\end{abstract}

\keywords{}

\maketitle
\setcounter{tocdepth}{1}

Approximating the ground energy or state of a local Hamiltonian is a fundamental problem in quantum physics.  While heuristic methods for providing upper bounds on the ground energy have been studied for many decades~\cite{doi:10.1126/science.231.4738.555, Or_s_2019, wu2023variational, doi:10.1126/science.aag2302},
techniques for providing lower bounds have bound mainly application in quantum chemistry~\cite{Mazziotti2007,Navascu_s_2008,10.1063/1.1360199, Barthel_2012,Baumgratz_2012, Fukuda2007, kull2022lower}.
Convex programs offer a systematic approach for relaxing both classical and quantum problems
and obtaining such lower bounds.
In particular, semidefinite programs (SDPs) are a natural fit for quantum mechanical problems as they allow modeling positivity of a state.
Recently, Quantum Max Cut (QMC) has established itself as a test bed for \emph{approximation algorithms} that provide rigorous bounds on the quality of the solution produced,
with approximation algorithms~\cite{
gharibian_et_al:LIPIcs:2019:11246,
https://doi.org/10.4230/lipics.tqc.2020.7,parekh2022optimal,
lee2022optimizing,
King2023improved,
lee2024improved}
based on noncommutative Lasserre SDP relaxations~\cite{
https://doi.org/10.4230/lipics.icalp.2021.102,
takahashi2023su2symmetric,
Watts2024relaxationsexact,
carlson2023approximation} at the forefront.
Quantum Max Cut problem is of particular interest because
it models antiferromagnetic quantum spin systems through an equivalence with the quantum Heisenberg Hamiltonian,
and it is a simple QMA-hard example of a 2-local Hamiltonian.

Given a graph $(G,E)$, quantum Max Cut problem asks for the ground state energy
of the Hamiltonian
 \begin{align}\label{eq:qmc_hamil}
  H &= - \sum_{(ij) \in E} \frac{1}{2}(\one - \swap_{ij})w_{ij},
 \end{align}
where the edge weights $\{w_{ij}\}_{(ij)\in E}$ are non-negative.
The swap interaction expands in terms of Pauli matrices as
$\swap = \frac{1}{2}(\one \ot \one + X \ot X + Y \ot Y + Z \ot Z)$.
 We can thus write the QMC Hamiltonian in Eq.~\eqref{eq:qmc_hamil} also as
 \begin{equation}
  H = - \frac{1}{4} \sum_{(i,j) \in E} (\one - X_i \ot X_j - Y_i \ot Y_j - Z_i \ot Z_j) w_{ij}.
 \end{equation}
 The QMC Hamiltonian tries to establish
 the maximally entangled singlet state $\tfrac{1}{\sqrt{2}}(\ket{01} - \ket{10})$ on every pair of vertices connected by an edge.
 However, the monogamy of entanglement (or more formally, the quantum de Finetti theorem)
 forbids the sharing of maximal entanglement with multiple partners.
 This induces frustration into the ground states, and consequentially,
 not all terms in the Hamiltonian can simultaneously be minimized.
 \begin{remark} Finding a ground state of the QMC Hamiltonian and the antiferromagnetic quantum Heisenberg model are equivalent since the Hamiltonians only differ in identity terms.  However, from an approximation perspective, the two problems differ.  Approximation guarantees for the Heisenberg model carry over to QMC but not vice versa.
 \end{remark}

With exact diagonalization, the largest systems
solved to date are 50 qubits using sublattice coding techniques~\cite{PhysRevB.100.155142, PhysRevE.98.033309}.
However, for disordered and large-scale systems such exact approaches fail
due to the growing Hilbert space dimension and the lack of spatial symmetries.
Variational algorithms can provide upper bounds, but work best on lattices in small dimensions.

Thus there is need for relaxations lower bounding the ground state energy.
Such approaches roughly fall into two classes:
a) quantum Lasserre or more general relaxations based on the NPA hierarchy~\cite{Navascu_s_2008}, which form moment matrices from the interaction algebra, and
b) marginal approximations, which replace the optimization over $\varrho$
   by a set of mutually consistent marginals~\cite{Mazziotti2007,10.1063/1.1360199, Barthel_2012,Baumgratz_2012, Fukuda2007, kull2022lower}.

\smallskip
\noindent {\bf Our contributions.}
The approach taken here is a mixture of the two types of approximation above:
we optimize over a set of reduced density matrices (quantum marginals)
that are mutually consistent as well as consistent with a Pauli level-1 quantum Lasserre moment matrix.
Conceptually,
this approach is similar to combining a classical Lasserre with Sherali-Adams hierarchy~\cite{sherali1990hierarchy},
with the difference that the local Sherali-Adams moment matrices are replaced by reduced density matrices.

First, we show that optimizing over mutually consistent 3-qubit reduced density matrices can be captured by a second order cone (SOC) program (Theorem~\ref{thm:SOC} in Section~\ref{sec:soc}).
SOC are specializations of semidefinite programs that can be solved more efficiently in practice.
In the process we show that a set of necessary constraints derived in~\cite{parekh2022optimal}
on the expectation values of 2-qubit quantum Max Cut terms are in fact sufficient (Theorem~\ref{thm:LM_PT} in Section~\ref{sec:3qb})
and can be expressed as a second order cone constraint.
This gives a concise description of the feasible space of expectation values for the three QMC terms on a triangle.

Second, we recall that the first level of the quantum Lasserre hierarchy yields a 0.498-approximation using product states~\cite{gharibian_et_al:LIPIcs:2019:11246},
and a better approximation using only the first level is unlikely~\cite{hwang2023unique}.
The second level of the quantum Lasserre hierarchy obeys certain monogamy of entanglement constraints~\cite{https://doi.org/10.4230/lipics.icalp.2021.102,parekh2022optimal},
described in the next section, enabling better approximations, with 0.595~\cite{lee2024improved} the current best.
However, the full generality of the second level of quantum Lasserre is not necessary to beat 0.498:
we show that the first level along with our second order cone constraints yields a 0.526-approximation (Theorem~\ref{thm:approx} in Section~\ref{sec:approx}).

On the practical side, we conduct a numerical study comparing our lower bounds with upper bounds obtained by state of the art methods on systems as large as 256 qubits (Section~\ref{sec:numerics}).
In contrast, SDP relaxations, including the recently introduced optimized SWAP hierarchies for quantum Max Cut, are typically practically impractical beyond tens of qubits ~\cite{takahashi2023su2symmetric,
Watts2024relaxationsexact}. 
Our numerics indicate that the relaxations we define are exact on Shastry-Sutherland models with small amounts of disorder (see \Cref{fig:SS_disorder}).
Similar to previously defined relaxations~\cite{takahashi2023su2symmetric}
we note that our hierarchy also ``mimics'' phase transitions in the Shastry-Sutherland model (see \Cref{fig:ShastrySutherland}).

Finally, we extend our relaxation to $k$-qubit marginals and local dimension $d > 2$ using the Weingarten formalism along with a new observation simplifying the approach.

We hope for our relaxations to become a new tool for physicists studying disordered systems and structureless local Hamiltonians. Our methods can inherently handle arbitrary interaction graphs, including irregular lattice models, such as those modeling defects (\Cref{fig:108qubits}).
These are typically hard to solve for variational methods due to lack of symmetry and locality.

\begin{figure}[tbp]
\includegraphics[width= 1\textwidth]{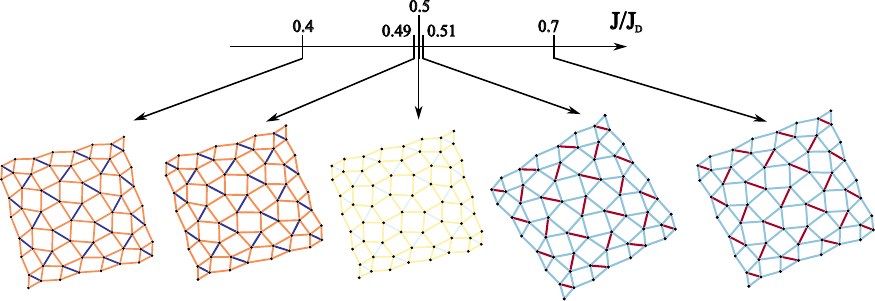}
 \caption{
 \label{fig:ShastrySutherland}
 Phase transitions between Dimer--Plaquette--Néel phases in the Shastry-Sutherland model calculated from our SOCP relaxation []\Cref{eq:relax2}].
 Dark blue edges indicate that the calculated objective along that edge is close to that of a singlet while dark red edges indicate that the objective is far from that of a singlet.
 Left: In the Dimer phase the ground state corresponds to a tensor product of singlet states along the blue edges.
 Center: The plaquette phase is an intermediate phase between these two extremes.
 Right: In the Néel phase the optimal state matches that of the square lattice,
 corresponding to aligned single qubit product states along the red edges. 
 While our relaxations mimic the behavior of the physical model, they do not precisely match it.
 For the physical model the transient plaquette phase occurs in a larger parameter regime with the phase transitions at different points.
}
\end{figure}

 \section{Main Results}

Write $x = \expect{\swap_{12}}$, $y = \expect{\swap_{13}}$, and $z = \expect{\swap_{23}}$.
The Lieb-Mattis and Parekh-Thompson~\cite{parekh2022optimal} conditions are:
\begin{align}
   0 \leq x + y + z &\leq 3\,, \label{eq:LM_I}\tag{LM}\\
   (x^2 + y^2 + z^2) - 2(xy + xz + yz) + 2(x + y + z) & \leq 3\,. \label{eq:PT_I}\tag{PT}
   \end{align}
We show that (see Section~\ref{sec:3qb}):
\begin{restatable}{thmA}{corLMPT}\label{thm:LM_PT}
The Lieb-Mattis~\eqref{eq:LM_I} and Parekh-Thompson triangle~\eqref{eq:PT_I} inequalities
form necessary and sufficient conditions for a real unitary-invariant trace one operator to represent a three-qubit quantum state.
\end{restatable}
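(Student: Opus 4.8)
The plan is to exploit the $U^{\ot 3}$-symmetry to block-diagonalize the operator and read off positivity from each block. By Schur--Weyl duality, an operator on $(\C^2)^{\ot 3}$ commuting with every $U^{\ot 3}$ lies in the image of the group algebra $\C S_3$, i.e.\ it is a linear combination of the six permutation operators. Since the totally antisymmetric subspace $\Lambda^3(\C^2)$ vanishes, the antisymmetrizer $\one-\swap_{12}-\swap_{13}-\swap_{23}+(123)+(132)$ is the zero operator; together with Hermiticity and the reality hypothesis this lets me eliminate the three-cycles and conclude that every real, unitary-invariant, Hermitian operator is a real linear combination of $\one,\swap_{12},\swap_{13},\swap_{23}$. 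A short Gram-matrix computation ($\tr(\swap_{ij}\swap_{ij})=8$, $\tr(\swap_{ij}\swap_{kl})=2$ for distinct pairs, $\tr(\swap_{ij})=4$) shows these four operators are linearly independent, so the trace-one normalization together with the triple $(x,y,z)=(\expect{\swap_{12}},\expect{\swap_{13}},\expect{\swap_{23}})$ pins the operator down uniquely. The theorem thus reduces to a single question: for which $(x,y,z)$ is this uniquely determined operator positive semidefinite?

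Next I would use the decomposition $(\C^2)^{\ot 3}\cong V_{3/2}\oplus(\C^2\ot V_{1/2})$ into a spin-$\tfrac32$ sector and two copies of spin-$\tfrac12$, and note that, being in the commutant, the operator is block diagonal along this isotypic splitting, so positivity is equivalent to positivity on each block. Using the total spin $\vec S=\tfrac12\sum_i\vec\sigma_i$ and the identity $\swap_{12}+\swap_{13}+\swap_{23}=\vec S^2-\tfrac34$, the spin-$\tfrac32$ block is a single scalar proportional to $\expect{\vec S^2}$, and its nonnegativity together with the trace constraint is exactly $0\le x+y+z\le 3$, i.e.\ \eqref{eq:LM_I}. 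On the two-dimensional multiplicity space of the spin-$\tfrac12$ sector each $\swap_{ij}$ acts as a traceless involution $\vec m_{ij}\cdot\vec\sigma$, and the relation that the three swaps sum to zero there forces the Bloch vectors $\vec m_{ij}$ to be coplanar unit vectors at $120^\circ$; writing the block as $\tfrac{t}{2}\one+\vec r\cdot\vec\sigma$, positivity is $t\ge 0$ together with $|\vec r|\le t/2$.

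The step I expect to be most delicate is the explicit translation of this $2\times 2$ block positivity into \eqref{eq:PT_I}. Here I would use the planar-frame identity $\sum_{ij}(\vec r\cdot\vec m_{ij})^2=\tfrac32|\vec r|^2$ to rewrite $x^2+y^2+z^2$, $xy+xz+yz$, and $x+y+z$ in terms of $t$ and $|\vec r|$; substituting these into the left-hand side of \eqref{eq:PT_I} should collapse it, up to the trace normalization, to a positive multiple of $t^2/4-|\vec r|^2$, so that \eqref{eq:PT_I} is equivalent to $|\vec r|\le t/2$. Necessity of \eqref{eq:LM_I} and \eqref{eq:PT_I} then follows at once from positivity of the two blocks. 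For sufficiency I run the construction in reverse: given $(x,y,z)$ obeying \eqref{eq:LM_I} and \eqref{eq:PT_I}, the formulas return a nonnegative spin-$\tfrac32$ eigenvalue, a nonnegative $t$, and a Bloch vector $\vec r$ with $|\vec r|\le t/2$, hence a genuine positive operator with the prescribed swap expectations. The two bookkeeping points to handle carefully are the dimensional multiplicities in the trace (the factors $4$ and $2$ from the spin-$\tfrac32$ and spin-$\tfrac12$ blocks) and the verification that the reality hypothesis is precisely what removes the antisymmetric direction $i[(123)-(132)]$, forcing $\vec r$ into the plane of the $\vec m_{ij}$ so that no positivity is lost in passing from $R\ge 0$ to the in-plane condition.
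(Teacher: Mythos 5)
Your proposal is correct and takes essentially the same route as the paper: your Schur--Weyl block decomposition into a spin-$3/2$ scalar and a $2\times 2$ matrix on the spin-$1/2$ multiplicity space is exactly the Eggeling--Werner decomposition the paper cites (Lemma~\ref{lem:EW_pos}), with the vanishing antisymmetrizer and reality removing $r_-$ and $r_3$, \eqref{eq:LM_I} coming from the $[3]$ block together with normalization, and \eqref{eq:PT_I} from the real Bloch-ball condition on the $[2,1]$ block. The one imprecision---nonnegativity of the spin-$3/2$ scalar plus trace one gives only $x+y+z\ge 0$, while the upper bound $x+y+z\le 3$ additionally requires $t=\tr M\ge 0$ from the spin-$1/2$ block---is harmless, since your sufficiency argument correctly extracts $t\ge 0$ from \eqref{eq:LM_I} and the trace constraint, and necessity has all blocks positive anyway.
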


\bigskip
A second order cone program (SOCP)  has form~\cite{LOBO1998193}
\begin{align}\label{eq:soc_prog}
 \underset{x}{\min} \quad & f^T x \nn\\
 \text{subject to}  \quad & \enorm{A_i x + b_i} \leq c_i^T x + d_i\,, \quad i=1,\dots, N\,,
\end{align}
where $\enorm{\cdot}$ is the Euclidean norm.
The variable $x\in\R^n$,
and the problem parameters are
$f \in \R^n$,
$A_i \in \R^{n_i-1 \times n}$,
$b_i \in \R^{n_i-1}$,
$c_i \in \R^n$, and
$d_i \in \R$.
Equality constraints of the form $A_i x = -b_i$ can be included by setting $c_i = d_i = 0$ in~\eqref{eq:soc_prog}.
Second order cone programs can be solved with interior point algorithms
and cover a larger class of problems than linear and quadratic programming.
Yet, they are also strictly weaker than semidefinite programming (SDP)~\cite{Fawzi2019}
and typically require less computational effort than similarly structured semidefinite programs~\cite{Alizadeh2003}.

The Lieb-Mattis and Parekh-Thompson conditions can be formulated as SOC constraints. It follows that (see Section~\ref{sec:soc}):

\begin{restatable}{thmA}{thmSOC}\label{thm:SOC}
 Quantum Max Cut is lower bounded by a second order cone program
 in $\binom{n}{2}$ variables.
 It corresponds to optimizing over mutually consistent 3-qubit reduced density matrices.
\end{restatable}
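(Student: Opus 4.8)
The plan is to reduce the ground energy to a linear objective in the $\binom{n}{2}$ swap expectations $x_{ij}=\expect{\swap_{ij}}$, relax global consistency to triangle-wise realizability, and then recast the resulting feasible region through Theorem~\ref{thm:LM_PT} as second order cone constraints. First I would observe that the objective $\tr(H\varrho)=-\sum_{(ij)\in E}\tfrac12(1-\expect{\swap_{ij}})w_{ij}$ depends on $\varrho$ only through the edge variables $x_{ij}$ and is linear in them. Since every $\swap_{ij}$ commutes with the global twirl $U^{\otimes n}$, twirling an optimal $\varrho$ leaves both the energy and each $x_{ij}$ unchanged while producing a unitary-invariant state; hence we may restrict to unitary-invariant states without changing the optimum. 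For such states every $2$-qubit marginal is a Werner state fixed by its swap expectation, so the data relevant to the problem is exactly the vector $(x_{ij})_{i<j}\in\R^{\binom{n}{2}}$.

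Next I would build the relaxation. Rather than demand a single global state, I require only that for each triple $\{i,j,k\}$ the three values $x_{ij},x_{ik},x_{jk}$ be realizable by some real unitary-invariant three-qubit state. This is a genuine relaxation — the twirled ground state certifies feasibility of the true optimum — so its value lower bounds the ground energy. By Theorem~\ref{thm:LM_PT}, realizability on a triangle is equivalent to the pair of inequalities~\eqref{eq:LM_I} and~\eqref{eq:PT_I}. Because each edge carries a single shared variable and unitary-invariant marginals are determined by their swap expectation, the three-qubit states attached to overlapping triples automatically agree on shared two-qubit marginals; thus the feasible region is precisely the set of mutually consistent three-qubit reduced density matrices, as claimed.

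It then remains to write~\eqref{eq:LM_I} and~\eqref{eq:PT_I} as second order cone constraints in the triangle variables $x,y,z$. The lower Lieb--Mattis bound $0\le x+y+z$ is linear. For~\eqref{eq:PT_I} I would diagonalise the quadratic form $x^2+y^2+z^2-2(xy+xz+yz)$, whose eigenvalues are $-1$ along $(1,1,1)$ and $2$ (twice) on the orthogonal complement, and complete the square. Writing $b^2+c^2=\tfrac13\big[(x-y)^2+(y-z)^2+(z-x)^2\big]$ for the part orthogonal to $(1,1,1)$, a short computation turns~\eqref{eq:PT_I} into
\begin{equation*}
 \sqrt{2}\,\enorm{(x-y,\,y-z,\,z-x)} \le 3-(x+y+z),
\end{equation*}
which is exactly of the form $\enorm{A_i x+b_i}\le c_i^T x+d_i$. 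Since the left-hand side is nonnegative, this single cone constraint also enforces the upper Lieb--Mattis bound $x+y+z\le 3$ for free. Assembling one such constraint per triple over the $\binom{n}{2}$ edge variables yields the claimed second order cone program.

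The main obstacle I anticipate is not the cone reformulation, which is a direct calculation, but justifying the interpretive half of the statement: that imposing the triangle conditions on shared single-edge variables genuinely corresponds to mutually consistent three-qubit reduced density matrices. The crux is that consistency of a family of unitary-invariant marginals collapses to agreement of swap expectations on shared edges, which the single-variable-per-edge encoding guarantees by construction; verifying this, together with the reduction to unitary-invariant states via twirling, is where the real content lies, with Theorem~\ref{thm:LM_PT} supplying the per-triangle characterization.
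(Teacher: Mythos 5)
Your proposal is correct and takes essentially the same route as the paper: both reduce to one real variable $x_{ij}$ per edge, invoke Theorem~\ref{thm:LM_PT} to characterize the per-triangle feasible set, note that consistency of overlapping unitary-invariant marginals is automatic because a $U\otimes U$-invariant two-qubit state is determined by its swap expectation, and rewrite \eqref{eq:PT_I} as a second order cone constraint (your three-vector form $\sqrt{2}\,\enorm{(x-y,\,y-z,\,z-x)} \le 3-(x+y+z)$ is algebraically equivalent to the paper's two-vector form built from the Eggeling--Werner quantities $r_1,r_2,r_0$, and correctly subsumes the upper Lieb--Mattis bound). The only cosmetic difference is that you justify the edge-variable reduction by an explicit twirling argument, whereas the paper starts from the SDP relaxation \eqref{eq:relax1} over general mutually consistent three-qubit marginals and then symmetry-reduces it.
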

Here, {mutually consistent} means that the three-qubit density matrices (marginals) satisfy
that
$\tr_k(\varrho_{ijk}) = \tr_\ell(\varrho_{ik\ell})$ for all pairwise distinct $i,j,k,\ell \in \{1,\dots, n\}$.
Naively, these are constraints on a collection of positive-semidefinite $8\times 8$ matrices.
We use the $U^{\ot n}$-invariance and realness of the QMC Hamiltonian
to symmetry-reduce the positivity condition on a single three-qubit marginal
to a linear and a second order cone constraint.

\bigskip
We combine this with the quantum Lasserre hierarchy: the idea is to consider a (pseudo-) moment matrix whose entries $\Gamma(E,F)$ correspond to
expectations $\tr(\varrho E^\dag F)$. As with a moment matrix coming from a state $\varrho$,
one requires that $\Gamma \succeq 0$ as well as all consistency constraints arising from the indexing set;
e.g. from the relation $E^\dag F = K^\dag H$ follows the condition $\Gamma(E,F) = \Gamma(K,H)$.
At level $d$, $\Gamma$ is indexed by all Pauli strings of weight at most $d$.
The Pauli level-1 quantum Lasserre relaxation then corresponds to optimizing a $3n\times 3n$ psd matrix.
In combination with the second-order cone relaxation, one has an approximation guarantee (see Section~\ref{sec:approx}):
\begin{restatable}{thmA}{thm1Lasserre}\label{thm:1Lasserre}[see \Cref{thm:approx}]
The Pauli level-1 Lasserre hierarchy in combination with the second order cone program
for mutually consistent 3-qubit reduced density matrices achieves an approximation ratio of $0.526$.
\end{restatable}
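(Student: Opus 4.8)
The plan is to convert any feasible solution of the combined relaxation into a genuine $n$-qubit state whose energy recovers at least a $0.526$ fraction of the relaxation value; since \Cref{thm:SOC} guarantees that this value upper bounds the optimal quantum Max Cut value in the maximization normalization $\sum_{(ij)\in E} w_{ij}\tfrac12(1-\expect{\swap_{ij}})$, an $\alpha$-good rounding immediately yields an $\alpha$-approximation. I would record the relaxation optimum as two coupled objects: a Pauli level-$1$ moment matrix $\Gamma \succeq 0$, presented through its Gram vectors $\{\vec v_i^{\,a}\}$ with $\expect{P_i^a P_j^b} = \langle \vec v_i^{\,a}, \vec v_j^{\,b}\rangle$ for $a,b\in\{X,Y,Z\}$, together with mutually consistent three-qubit marginals whose pairwise swap expectations obey \eqref{eq:LM_I} and \eqref{eq:PT_I} on every triple, which by \Cref{thm:LM_PT} is exactly the condition that each triple extends to a genuine three-qubit state.

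First I would recall the product-state rounding behind the $0.498$ bound of \cite{gharibian_et_al:LIPIcs:2019:11246}: a random projection of the Gram vectors to single-qubit Bloch vectors produces a product state whose expected contribution on an edge is a fixed scalar function $f_{\mathrm{prod}}(t)$ of the relaxed edge value $t=\tfrac12(1-\expect{\swap_{ij}})$, and the worst-case ratio $f_{\mathrm{prod}}(t)/t$ is minimized near strongly anti-correlated edges. To improve on this I would take a convex mixture of this product state with a second, entangled rounding that uses the marginals directly, placing singlet(-like) states along a randomly chosen matching or along the edges the relaxation reports as most anti-correlated, so that precisely the frustrated configurations responsible for the $0.498$ bottleneck are rounded more favorably. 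The mixing weight, together with the variance parameter of the projection, are free parameters to be optimized in the analysis.

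The expected algorithm value decomposes over edges, and for the entangled component over the local triangles carrying the marginals, so the approximation ratio reduces to a low-dimensional optimization: minimize $\mathrm{ALG}/\mathrm{Relax}$ over admissible correlation data. The decisive input is that \eqref{eq:LM_I} and \eqref{eq:PT_I} exclude the simultaneously near-maximally anti-correlated triangles on which pure product rounding is weakest; intersecting the sublevel sets of the per-edge ratio with the second order cone raises the constrained minimum from $0.498$ to $0.526$. I expect the extremal configuration to be a single frustrated triangle, or a small star/triangle gadget, at one specific correlation value, with the optimal mixing weight and projection variance pinned down jointly by the first-order conditions there.

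The main obstacle is making the worst-case optimization rigorous once the nonlinear \eqref{eq:PT_I} constraint is active. The per-edge product function is already transcendental, coming from a Gaussian projection integral, and intersecting its sublevel sets with the Parekh--Thompson cone makes a clean closed-form worst case unlikely. I would therefore reduce the final step to a scalar, or at most two-variable, inequality on a compact interval and discharge it by a controlled numerical argument, then confirm tightness against an explicit small instance so that the $0.526$ constant is shown to be the true guarantee of this rounding rather than an artifact of a loose bound.
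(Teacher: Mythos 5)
Your overall architecture---solve the combined relaxation, produce both a Gaussian-projection product state from the Gram vectors of the Pauli level-1 moment matrix and a state with singlets on the most anti-correlated edges, combine the two, and reduce the guarantee to a worst-case optimization constrained by \eqref{eq:LM_I} and \eqref{eq:PT_I}---has the same shape as the paper's \Cref{alg:approx}. But the two steps that actually carry the paper's proof are missing or misidentified. First, you never establish that the edges receiving singlets can simultaneously receive them, i.e.\ that they form a matching. ``A randomly chosen matching'' does not work: it is uncorrelated with the high-value edges, which is exactly where product rounding is weak; and the maximum-matching route of \cite{lee2024improved} is unavailable because it relies on the star bound, which this relaxation does not imply (the paper can enforce it only on two-edge stars, via \eqref{eq:LM_I}). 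The paper's fix is a threshold set $S=\{e : y_e > t\}$ with $t=0.771>3/4$: the triangle constraints then do double duty, with \eqref{eq:LM_I} (two-edge star bound) forcing $S$ to be a matching, and \eqref{eq:PT_I}, through item (4) of \Cref{lem:tech_rounding}, forcing every edge adjacent to $S$ to have value at most $t'=\tfrac14\bigl(3-2t+2\sqrt{3}\sqrt{t-t^2}\bigr)\approx 0.728<t$.

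Second, the analysis mechanism you describe would not yield the constant. \eqref{eq:PT_I} does not ``raise the constrained minimum of the per-edge product-rounding ratio''; a single edge can still have relaxation value $1$, where that ratio is near its worst. Moreover, a per-edge or per-triangle analysis of the better of the two states is invalid, since the maximum of two sums is not the sum of per-edge maxima, so no local gadget (in particular not a single frustrated triangle) determines the bound. The paper instead partitions edges into $S$, $T$ (adjacent to $S$, hence $y_e\le t'$), and $U$; bounds the per-edge ratios of each state on each class ($1$, $\tfrac14$, $F(t)$ for the matching state; $0.498$, $F(t')$, $F(t)$ for the product state, the latter two requiring monotonicity of $F$---itself a nontrivial hypergeometric-derivative argument, item (3) of \Cref{lem:tech_rounding}); and then lets the weighted fractions $\alpha,\beta,\gamma$ of relaxation value on the three classes be chosen adversarially, obtaining $0.526$ as the value of
\[
\min_{\substack{\alpha+\beta+\gamma=1\\ \alpha,\beta,\gamma\ge 0}}\;
\max\Bigl\{\,0.498\,\alpha+F(t')\,\beta+F(t)\,\gamma,\;\;\alpha+\tfrac14\beta+F(t)\,\gamma\Bigr\}.
\]
Your fixed convex mixture could replace the max (by LP duality it gives the same value), but only after introducing this $S/T/U$ classification and the PT-derived bound $t'$; neither appears in your proposal, and without them the optimization you would set up does not produce $0.526$.
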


\bigskip
The best known approximation for QMC achieves a better approximation factor of $0.595$~\cite{lee2024improved}.
However, the rounding algorithm for this and prior work uses second level of the NPA hierarchy, with respect to either Pauli operators (so-called quantum Lasserre) or the first level of a hierarchy with respect SWAP operators.
Either becomes intractable for instances as small as $20$ qubits.
We obtain a significant practical runtime improvement over all known approximation algorithms that approximate QMC to a factor at least $0.526$.

\bigskip
Our symmetry-reduction can somewhat straightforwardly be extended to $k$-partite relaxations and dimension $d\geq 3$.
We side-step the Weingarten formalism and simplify the expansion of the state by formulating positivity of an operator directly in terms of
its expectation values $\expect{\sigma} = \tr(\sigma A)$, $\sigma \in S_k$ (see Section~\ref{sec:kbody}). 

\begin{restatable}{thmA}{thmPOS}\label{thm:pos}
Let $A$ be unitary invariant and $R_\mu$ be irreducible orthogonal represenations.
Then
\begin{equation*}\label{eq:pos_in_irrep_simple}
 A \succeq 0 \quad \iff \quad
 \bigoplus_{\substack{\mu \vdash k \\ |\mu| \leq d}}
 \sum_{\sigma \in S_k} \langle \sigma \rangle  R_\mu( \sigma) \succeq 0\,.
\end{equation*}
\end{restatable}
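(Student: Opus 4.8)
The plan is to invoke Schur--Weyl duality to turn the global condition $A\succeq 0$ into positivity of the symmetric-group blocks of $A$, and then to observe that the block one would normally reconstruct through the Weingarten map equals, up to a strictly positive scalar (and a harmless transpose), the matrix $\sum_\sigma \expect{\sigma}\,R_\mu(\sigma)$ itself; this is exactly what lets us side-step the Weingarten inversion. I write $\sigma$ both for a permutation in $S_k$ and for the operator it induces on $\cdk$, and take $A$ to be real symmetric as in the QMC setting. First, unitary invariance means $A$ commutes with $U^{\ot k}$ for every unitary $U$, so by Schur--Weyl duality $A$ lies in the algebra spanned by the $\sigma$, i.e.\ $A=\sum_{\tau\in S_k}a_\tau\,\tau$ with real coefficients $a_\tau$. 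Under $\cdk\cong\bigoplus_{\mu}V_\mu\ot W_\mu$ (the sum over $\mu\vdash k$ with at most $d$ parts, the condition written $|\mu|\le d$) each $\sigma$ acts as $\bigoplus_\mu \id_{V_\mu}\ot R_\mu(\sigma)$, so $A$ acts as $\bigoplus_\mu \id_{V_\mu}\ot\rho_\mu(A)$ with $\rho_\mu(A):=\sum_\tau a_\tau R_\mu(\tau)$. Hence $A\succeq 0$ if and only if $\rho_\mu(A)\succeq 0$ for every such $\mu$.

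Second, I would compute $M_\mu:=\sum_{\sigma}\expect{\sigma}\,R_\mu(\sigma)$ explicitly. Expanding $\expect{\sigma}=\tr(\sigma A)=\sum_\tau a_\tau\tr(\sigma\tau)=\sum_\tau a_\tau\,d^{c(\sigma\tau)}$, where $c(\pi)$ is the number of cycles of $\pi$ and $\tr(\pi)=d^{c(\pi)}$ on $\cdk$, and reindexing by $\pi=\sigma\tau$ gives
\begin{equation*}
 M_\mu=\sum_\tau a_\tau\Big(\sum_{\pi}d^{c(\pi)}R_\mu(\pi)\Big)R_\mu(\tau^{-1}).
\end{equation*}
The map $\pi\mapsto d^{c(\pi)}$ is the character of the permutation representation of $S_k$ on $\cdk$, hence a class function, so by Schur's lemma $\sum_\pi d^{c(\pi)}R_\mu(\pi)=\lambda_\mu\,\id$ for a scalar $\lambda_\mu$. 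Taking the trace and using the Schur--Weyl expansion $d^{c(\pi)}=\sum_\nu(\dim V_\nu)\chi_\nu(\pi)$ together with orthogonality of the $S_k$ characters yields $\lambda_\mu=k!\,\dim V_\mu/\dim W_\mu$, which is strictly positive exactly when $\mu$ has at most $d$ parts (so that $\dim V_\mu>0$) and vanishes otherwise.

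Third, since $R_\mu$ is orthogonal, $R_\mu(\tau^{-1})=R_\mu(\tau)^{\top}$, whence $M_\mu=\lambda_\mu\sum_\tau a_\tau R_\mu(\tau)^{\top}=\lambda_\mu\,\rho_\mu(A)^{\top}$. Realness of $A$ gives $\expect{\sigma^{-1}}=\expect{\sigma}$, so each $M_\mu$ is symmetric, and since $\lambda_\mu>0$ and transposition preserves positive-semidefiniteness we get $M_\mu\succeq 0\iff\rho_\mu(A)\succeq 0$. Assembling the blocks into the direct sum and combining with the first step proves the equivalence; the partitions with more than $d$ parts drop out because $\lambda_\mu=0$ there, consistent with the restriction $|\mu|\le d$ in the statement.

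The main obstacle is the middle step: the conceptual point that the Weingarten/Gram inversion is unnecessary and that $\sum_\sigma\expect{\sigma}R_\mu(\sigma)$ already reads off the block directly, together with the correct evaluation of the Schur scalar $\lambda_\mu$ by recognizing $d^{c(\pi)}$ as the permutation character and applying character orthogonality. Once $\lambda_\mu>0$ is established, the orthogonality-induced transpose and the reduction to blocks are routine.
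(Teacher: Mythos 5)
Your proof is correct, and while it shares the paper's overall skeleton it handles the key step by a genuinely different, more self-contained argument. Both proofs first use Schur--Weyl duality to write $A=\sum_\tau a_\tau T(\tau)$ and reduce $A\succeq 0$ to positivity of the blocks $\rho_\mu(A)=\sum_\tau a_\tau R_\mu(\tau)$ for $\height(\mu)\leq d$ [Eq.~\eqref{eq:psd_decomp}]. The difference is in how $M_\mu=\sum_\sigma\expect{\sigma}R_\mu(\sigma)$ gets identified with a positive multiple of such a block: the paper quotes the Weingarten expansion $A=\sum_\sigma \tr(\sigma^{-1}A)\,\sigma\,\Wg(d,k)$ and uses that $\Wg(d,k)$ is a positive combination of the central Young projectors $p_\lambda$ together with $R_\mu(p_\lambda)=\delta_{\mu\lambda}\one$, whereas you never invoke the Weingarten calculus at all: you compute the forward moment map directly, recognize $\pi\mapsto d^{c(\pi)}=\tr T(\pi)$ as a class function (the permutation character), apply Schur's lemma to get $\sum_\pi d^{c(\pi)}R_\mu(\pi)=\lambda_\mu\id$, and evaluate $\lambda_\mu=k!\dim\UU_\mu/\dim\SS_\mu$ by character orthogonality, yielding $M_\mu=\lambda_\mu\,\rho_\mu(A)^{T}$. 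Reassuringly, your $\lambda_\mu$ is exactly the reciprocal of the paper's Weingarten eigenvalue $\frac{1}{k!}\,\chi_\mu(\id)/s_{\mu,d}(1)$, as it must be, since you are computing the inverse direction of the same map. Your route buys elementarity (only Schur's lemma and orthogonality of characters, with no appeal to the Weingarten expansion) plus an explicit value of the scalar, and it makes transparent why partitions with more than $d$ rows drop out ($\lambda_\mu=0$ there); the paper's route is shorter given the Weingarten machinery and directly delivers the theorem's conceptual message that this machinery never needs to be evaluated. One caveat: you assume $A$ real symmetric, which suits the QMC application and the paper's implicit setting; for a general Hermitian unitary-invariant $A$ the identical argument runs with adjoints in place of transposes, using $\overline{a_{\tau^{-1}}}=a_\tau$.
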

This avoids the computation of characters of the symmetric group and Schur polynomials
in the usual expansion through the Weingarten formula.

\section{Proof ideas}

\noindent {\bf Theorem A: Lieb-Mattis and Parekh-Thompson conditions.}
Consider a relaxation of the quantum Max Cut problem that optimizes
over the set of mutually compatible reduced three-qubit density matrices.
A result by Eggeling
and Werner~\cite{PhysRevA.63.042111} shows that the any unitary-invariant three-qubit state can be block-diagonalized,
so that the positivity of $\varrho_{ijk}$ corresponds to the positivity in two subspaces (formally, irrep).
The first irrep is $1$-dimensional,
while the second is irrep is real and $2$-dimensional,
for which the positive semidefinite constraint can be formulated as a second order cone constraint.
The third (completely anti-symmetric) irrep of the symmetric group~$S_3$ vanishes and thus does not give a constraint on positivity.

Let us now see how the number of variables reduces from $3!=6$ to $3$:
first, the vanishing irrep corresponds to the identity $\id - (12) - (13) - (23) + (123) + (132) = 0$ for qubits.
Furethermore, the QMC Hamiltonian is real and as a consequence $\langle (123) \rangle = \langle (132)\rangle \in \R$.
This reduces the number of variables to three and
yields the LM and PT conditions for unitary-invariant operators of trace one.

\smallskip
\noindent {\bf Theorem B: Second-order cone relaxation.}
Considering all three-qubit subsystems, the consistent marginals relaxation consists of a
second order cone program (SOCP) in $\binom{n}{2}$ real variables. Here each variable to corresponds to an pair of vertices.
A key ingredient to this second order cone program is to directly use the expectation values $\expect{\swap_{ij}} = \tr(\varrho\swap_{ij})$
instead of coefficients for an operator basis.

\smallskip
\noindent {\bf Theorem~\ref{thm:1Lasserre}: Approximation algorithm.} Our approximation algorithms follows previous approaches,
including the best-known 0.595 approximation, based on producing a state that is a tensor product of 1- and 2-qubit states.
Being able to find a good product state solution is necessary in some regimes;
for example, product states are approximately optimal as input graphs grow dense~\cite{brandao2016product}.
Level 1 of the quantum Lasserre hierarchy is necessary for a good product state approximation,
and we add to this our second order cone constraints to form an SDP relaxation.
All previous approaches use the LM condition on a star,
while we can only enforce this on stars with two edges.
This presents difficulties in obtaining a good approximation, requiring a new ingredient in our analysis.
Ideally one would be able to show that a set of SDP relaxation values resulting in a bad approximation guarantee cannot occur as an optimal SDP solution for an actual input graph.
Our analysis attempts to mimic this by keeping track of the fraction of edges with low, medium, or high SDP value and allows these fractions to be chosen adversarially.

\smallskip
\noindent {\bf Theorem~\ref{thm:pos}: symmetry-reduction of k-body marginals}
The usual expansion of an invariant operator reads in terms of expectation values as
$\varrho = \sum_{\pi \in S_k} \tr(\varrho \pi) \pi^{-1} \Wg(d,n)$,
where $\Wg(d,n)$ is the Weingarten operator
$
 \Wg(d,k)
  = \frac{1}{k!} \sum_{\substack{\lambda \vdash k \\ |\lambda| \leq d}} \frac{\chi_\lambda(\id)}{s_{\lambda,d}(1)} p_\lambda
$.
Here $\chi_\lambda$ is the character of the symmetric group,
$s_{\lambda,d} = s_\lambda(1,\dots, 1)$ ($d$ times) the Schur polynomial,
and $P_\mu$ is the central Young projector projecting onto the isotypic component labeled by $\mu$.
Theorem~\ref{thm:pos} then follows from the fact that $\Wg(d,k)$ acts as a positive scalar
on each non-vanishing isotypic component.

\begin{figure}[tbp]
\begin{subfigure}{0.49\textwidth}
\centering
 \includegraphics[width= 0.9\textwidth]{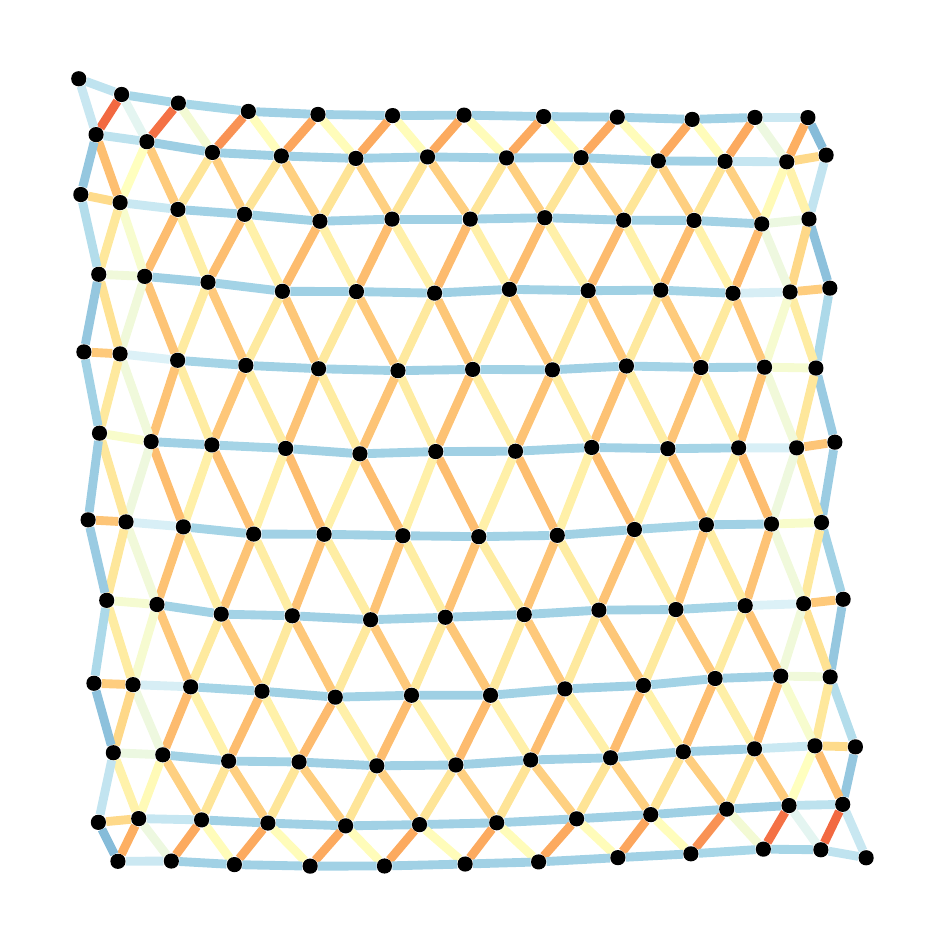}
\end{subfigure}
\begin{subfigure}{0.49\textwidth}
\centering
 \includegraphics[width= 0.9\textwidth]{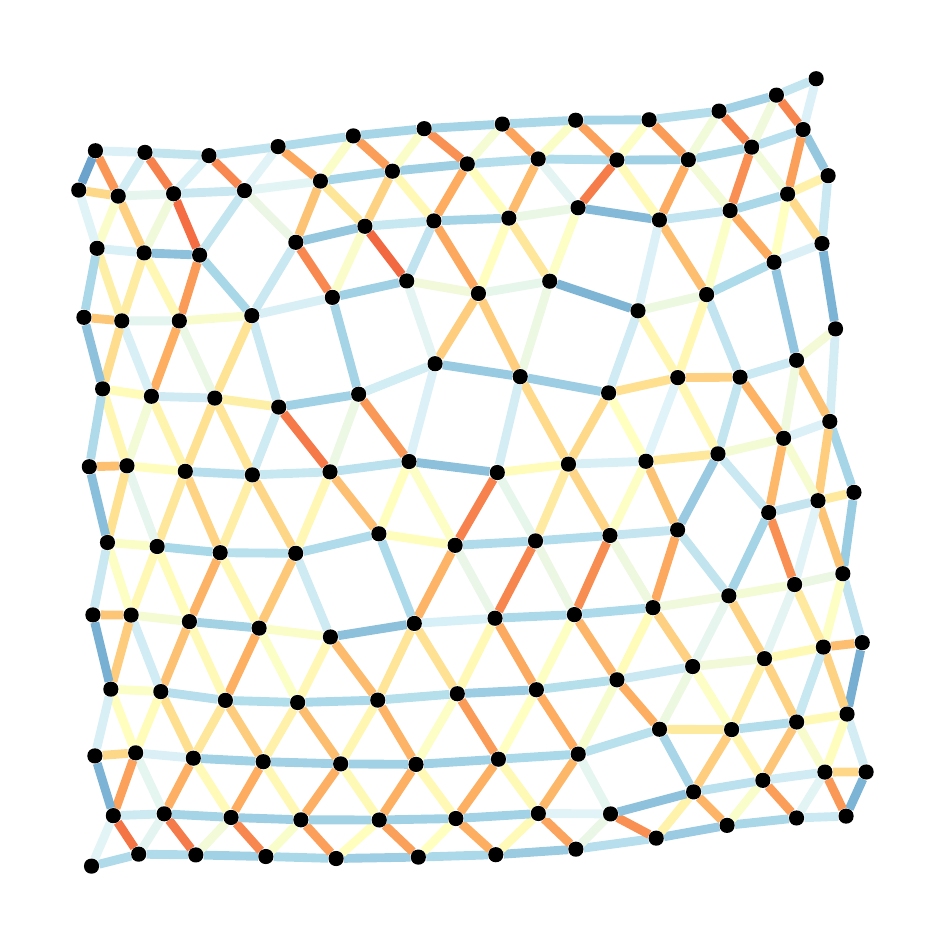}
\end{subfigure}
 \caption{
 \label{fig:108qubits}
 Approximations to quantum Max Cut on a 144-qubit lattice using the second order cone + Pauli level-1 relaxation.
 Shown are edge values that correspond to interaction terms in the Hamiltonian.
 Each subset of three vertices corresponds to a valid three-qubit reduced density matrix.
 A blue edge corresponds to the singlet state ($\langle \swap_{ij} \rangle= -1$)
 and satisfies its edge constraint, showing no frustration.
 A red edge correspond to a state orthogonal to the singlet ($\swap_{ij} \rangle = 1$) and indicates frustration.
 Compared to the lattice without defects where boundary effects dominate (right),
 lattice defects (left) curb to propagation of boundary effects into the interior, imposing local order.}
\end{figure}

\section{Discussion}

\subsection{Higher dimensions and larger marginals.}
Naturally, a similar approach can be taken for quantum Max Cut on higher spin systems:
Symmetry reduce the $k$-body marginals and demand their mutual consistency.
However, not all nice features of the three-qubit relaxation carry through:
For tri-partite system with $d\geq 2$ the partition $[1,1,1]$ does not vanish,
and consequently the relaxation requires an extra variable $\langle (123) \rangle = \langle (132) \rangle \in \R$.
The relaxation remains a second order cone program, now in four variables.
However, already starting from $k=4$ there are genuine semidefinite constraints appearing in the relaxation:
the $[3,1]$ partition of $S_4$ corresponds to a $3\times 3$ semidefinite constraint.
This makes the relaxations computationally less efficient.

Regarding a reduction in the number of variables,
in the case of $d=2$ (qubits),
this can be achieved with the Involution Basis
formed by $\{\sigma \in S_k\,|\, \sigma = \sigma^{-1}\}$
whose cardinality is the Catalan number $\binom{2n}{n} / (n+1)$~\cite{procesi2024specialbases2swapalgebras}.
In higher dimensions, the more general {\em Good Permutations} basis reduces the number of variables~\cite{procesi2020note}.


\subsection{Scaling.}
Most robust implementations solving SDP or SOCP rely on interior point methods.  While the number of total interior point iterations can be comparable for the two, the key difference in complexity comes from testing whether a potential solution is feasible.  In the SDP case, one must check that a matrix is positive semidefinite (PSD), while in the SOCP case, each SOC constraint can be checked in constant time in our case, and we have $O(n^3)$ such constraints, where $n$ is the number of qubits.  Checking feasibility of the second level of the (Pauli-based) quantum Lasserre hierarchy or the first level of the SWAP hierarchy entails checking that an $O(n^2) \times O(n^2)$ matrix is PSD.  Even if we augment our SOCP only with the first level of the quantum Lasserre hierarchy, as is necessary for \Cref{thm:1Lasserre}, the cost becomes checking whether an $O(n) \times O(n)$ matrix is PSD and checking $O(n^3)$ constraints of constant size.  Attempting to compute a Cholesky decomposition can be used to detect whether a matrix is PSD, and with this approach we get an $O(n^6)$ algorithm for checking feasibility for the quantum Lasserre or SWAP hierarchies, compared to $O(n^3)$ for our approach.  Of course this worst-case complexity may not reflect empirical behavior; however, our numerical study support the hypothesis that our SOCP relaxation is significantly more efficient than solving more conventional SDP relaxations.

\subsection{Outlook.}
Interestingly, the constraints corresponding to valid $3$-body marginals coincide exactly with monogamy constraints found by Ref.~\cite{parekh2022optimal} in the context of Lasserre relaxations.
This raises the question at which point the quantum Lasserre relaxations become exact.
Considering the Pauli Lasserre hierarchy alone,
the rank loop condition~\cite{doi:10.1137/090760155} in the Pauli hierarchy is met at level $\lceil n/2 \rceil$.
To accurately represent a k-qubit system, one thus requires to solve for at most level $\lceil k/2 \rceil$.
For hierarchies symmetrized to the Hamiltonian though,
it is not clear at what level convergence should happen and it is surprising that this relatively simple relaxation is exact on three qubits.
An interesting open direction is the understanding the minimal relations needed which coincide with the conditions for $k$-body marginals both for qubits and for $d\geq 3$.

The numerical results indicate that our relaxations can provide exact ground state energies
for simple models with disorder and can also detect phase transitions.
This adds to the growing body of evidence that convex relaxations can be powerful tools for physics
and in particular for perturbation theory~\cite{hastings2022perturbation}.
Our tools are particularly applicable to numerical study since second order cone programs
are more efficient to solve than generic semi-definite programs.

\section{Concepts}

\subsection{Notation}
We denote expectation values of observables $A$ on a state
$\varrho$ by
$\expect{A} = \tr(A \varrho)$.
Unless otherwise specified, we work with $n$-qubit systems on $(\C^2)^{\otimes n}$.
The Pauli matrices are denoted by $X,Y,Z$;
when acting on a subsystem $i$ as $X_i,Y_i,Z_i$.
The variables $x_{ij} \in \R$ approximate expectation values $\langle \swap_{ij} \rangle$ of swap operators $\swap_{ij}$.

\subsection{Permutations}

The swap operator exchanges the two tensor factors of $\C^d \ot \C^d$ as
\begin{equation}
\swap  \ket{v} \ot \ket{w} = \ket{w} \ot \ket{v}\,.
\end{equation}
We use $\swap_{ij}$ to denote the action of $\swap$ on qubits $i$ and $j$, i.e. the map
\begin{align}
  \ket{v_1} \ot \dots \ot
        \ket{v_{\underline{i}}} \ot \dots \ot
        \ket{v_{\underline{j}}} \ot \dots \ot
        \ket{v_n} 
 \mapsto
   \ket{v_1} \ot \dots \ot
        \ket{v_{\underline{j}}} \ot \dots \ot
        \ket{v_{\underline{i}}} \ot \dots \ot
        \ket{v_n}.
\end{align}
Note that for $d=2$,
\begin{align}
  \swap_{ij}=\frac{1}{2}\left(I_i\otimes I_j+X_i\otimes X_j+Y_i\otimes Y_j+Z_i\otimes Z_j\right).\label{eqn:SWAP}
\end{align}

Since $\swap_{ij}$ acts as a transposition,
it generates a representation $T$ of the symmetric group~$S_n$ on $\cdn$.
Its action is
\begin{align}
 T({\pi})
 \ket{v_1} \ot \dots \ot
        \ket{v_n}
 =
 \ket{v_{\pi^{-1}(1)}}
        \ot \dots \ot
        \ket{v_{\pi^{-1}(n)}} \,.
\end{align}
We denote the span of $\{T(\pi)\,|\, \pi \in S_n \}$ as $\Sigma_{n}$,
and the subspace of symmetric matrices by $\Sigma_{n}^{\sym} =  \Span\big(\{A \in \Sigma_n \,|\,A^T = A\}\big)$.
We call operators satisfying $(U^{\ot n})^\dag B U^{\ot n} = B$, $\forall U \in \mathcal{U}(d)$
as {\em $U^{\ot n}$-invariant}, or simply {\em unitary invariant}.

\subsection{Schur-Weyl duality}
The Schur-Weyl duality states that the commutant of the algebra $\Sigma_n$ spanned by $\{T(\pi) \,:\, \pi \in S_n\}$
on the space $\cdn$
is the algebra spanned by $A^{\otimes n}$ with $A \in \GL(d)$.
The vector space then decomposes as
\begin{equation}\label{eq:SW}
 \cdn \simeq \bigoplus_{\substack{\lambda \;\vdash \;n \\ \height(\lambda) \leq d}} \UU_\lambda \ot \SS_\lambda\,.
\end{equation}
where the diagonal action $A^{\ot n}$ of the general linear group $\GL(d)$ acts on $\UU_\lambda$
and the symmetric group as $T(\pi)$ on $\SS_\lambda$.
In particular, the QMC Hamiltonian is unitary-invariant, and so is its ground state. 
As a consequence, both can be expanded in terms of permutation operators.

\section{Three qubits}\label{sec:3qb}
We now introduce the key component of our relaxation:
a single three-qubit reduced density matrix, which can be understood as a ``frustration triangle''.
Our aim is to solve this subsystem system exactly,
so that it is consistent with all other three-qubit marginals. 
Henceforth, for simplicity we drop the $T(\pi)$ notation and denote permutations as products of cycles, e.g. $(123)$, with $\id$ denoting the identity permutation.

\subsection{Symmetry-reduction}

In a seminal paper, Eggeling and Werner established a decomposition for tripartite unitary-invariant states~\cite{PhysRevA.63.042111} on $(\C^d)^{\otimes 3}$.
A convenient basis for the invariant space is
\begin{align}
 R_+ &= \frac{1}{6} \big( \id + (12) + (23) + (13) + (123) + (132) \big) \,,\nn\\
 R_- &= \frac{1}{6} \big( \id - (12) - (23) - (13) + (123) + (132) \big) \,,\nn\\
 R_0 &= \frac{1}{3} \big( 2\id - (123) - (132) \big)  \,,\nn\\
 R_1 &= \frac{1}{3} \big( 2(23) - (13) - (12) \big) \,,\nn\\
 R_2 &= \frac{1}{\sqrt{3}} \big((12) - (13) \big) \,,\nn\\
 R_3 &= \frac{i}{\sqrt{3}} \big( (123) - (132) \big) \,.
\end{align}
The projectors $R_+$, $R_-$, and $R_0$ correspond to the three isotypic subspaces labeled by the partitions $[3]$, $[1,1,1]$, and $[2,1]$ respectively.
The operators $R_1, R_2, R_3$ act as Pauli matrices on the subspace corresponding to projector $R_0$. We use the following Lemma:

\begin{lemma}[State characterization~\cite{PhysRevA.63.042111}] \label{lem:EW_pos}
 For any operator $\varrho$ on $\HH^{\ot 3}$ define the six parameters $r_k = \tr(\varrho R_k)$, for $k=\{+,-,0,1,2,3\}$.
 Each unitary-invariant $\varrho$ is uniquely determined by the tuple $(r_+,r_-,r_0,r_1,r_2,r_3) \in \R^6$,
 and such tuple belongs to a density matrix in $\HH^{\ot 3}$ if and only if
 \begin{align}
  r_+, r_-, r_0 \geq 0\,,   \quad\quad r_+ + r_- + r_0 = 1 \,, \quad\quad r_1^2 + r_2^2 + r_3^2 \leq r_0^2\,.
 \end{align}
\end{lemma}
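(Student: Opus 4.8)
The plan is to use Schur--Weyl duality to block-diagonalize $\varrho$, reduce the positivity constraint blockwise via Schur's lemma, and then translate the resulting block conditions back into the parameters $r_k$. Throughout I would treat the generic case $\dim\HH\geq 3$ in which all three isotypic components are present; for qubits ($\dim\HH=2$) the component $[1,1,1]$ is absent, so $R_-=0$ and $r_-\equiv 0$, and the statement specializes accordingly.

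First I would settle the uniqueness claim. By Schur--Weyl duality~\eqref{eq:SW}, any $U^{\ot 3}$-invariant operator lies in the commutant $\Sigma_3=\Span\{T(\pi):\pi\in S_3\}$, which has dimension $6$, and each $R_k$ is a Hermitian element of $\Sigma_3$; hence $r_k=\tr(\varrho R_k)\in\R$ for Hermitian $\varrho$. Because $R_+,R_-,R_0$ are mutually orthogonal projectors and $R_1,R_2,R_3$ are traceless and supported on the range of $R_0$, the Gram matrix $\tr(R_kR_j)$ is diagonal with strictly positive entries, so the $R_k$ form a basis of $\Sigma_3$ and the map $\varrho\mapsto(r_+,\dots,r_3)$ is a linear bijection onto $\R^6$.

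Next I would block-diagonalize. Along $(\C^d)^{\ot 3}\simeq\bigoplus_\lambda\UU_\lambda\ot\SS_\lambda$ with $\lambda\in\{[3],[1,1,1],[2,1]\}$, Schur's lemma forces the invariant $\varrho$ to act as a scalar on each multiplicity space $\UU_\lambda$. Since $\SS_{[3]}$ and $\SS_{[1,1,1]}$ are one-dimensional, $\varrho|_{[3]}=\lambda_+\,\id$ and $\varrho|_{[1,1,1]}=\lambda_-\,\id$ are scalars, while $\varrho|_{[2,1]}=\id_{\UU_{[2,1]}}\ot M$ for a single $2\times 2$ Hermitian matrix $M$ acting on the two-dimensional standard representation $\SS_{[2,1]}$. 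Thus $\varrho\succeq 0$ if and only if $\lambda_+\geq 0$, $\lambda_-\geq 0$, and $M\succeq 0$. Using $R_++R_-+R_0=\id$ gives $r_++r_-+r_0=\tr\varrho$, so the trace-one condition becomes $r_++r_-+r_0=1$; and since $R_+,R_-$ are the projectors onto the first two blocks, $r_+$ and $r_-$ are positive multiples of $\lambda_+,\lambda_-$, giving $r_+,r_-\geq 0$.

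The crux is the $[2,1]$ block. Here I would verify that $R_0$ restricts to the identity and $R_1,R_2,R_3$ restrict to the three Pauli matrices $\sigma_x,\sigma_y,\sigma_z$ on $\SS_{[2,1]}$, with a single common normalization fixed by the prefactors in their definitions. Writing $M=a_0\id+a_1\sigma_x+a_2\sigma_y+a_3\sigma_z$, one then finds $r_k=c\,a_k$ for $k\in\{0,1,2,3\}$ with the \emph{same} positive constant $c=2\dim\UU_{[2,1]}$, so that the common factor cancels: the eigenvalues of $M$ are $a_0\pm\sqrt{a_1^2+a_2^2+a_3^2}$, whence $M\succeq 0\iff r_0\geq 0\text{ and }r_1^2+r_2^2+r_3^2\leq r_0^2$. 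Collecting the three items yields exactly the stated constraints, and conversely any tuple satisfying them reconstructs, via the bijection above, a positive semidefinite trace-one invariant operator. The main obstacle I anticipate is precisely this last identification: checking that $R_1,R_2,R_3$ act as \emph{equally normalized} Pauli operators (equivalently $\tr(R_iR_j)\propto\delta_{ij}$ with an $i$-independent constant, and $R_i^2\propto R_0$ on the block) so that no stray scalar spoils the clean second-order-cone form. This is a finite computation in the representation theory of $S_3$, which can also be imported directly from Eggeling--Werner~\cite{PhysRevA.63.042111}.
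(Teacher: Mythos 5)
Your proof is correct and takes essentially the same route as the paper, which does not prove this lemma itself but imports it from Eggeling--Werner~\cite{PhysRevA.63.042111} and sketches exactly your argument in its ``Proof ideas'' section: Schur--Weyl block-diagonalization, blockwise positivity via Schur's lemma (scalars on the one-dimensional $[3]$ and $[1,1,1]$ irreps, a $2\times 2$ Hermitian block on $[2,1]$), and identification of $R_0,R_1,R_2,R_3$ with the identity and equally normalized Pauli matrices on the two-dimensional irrep. The one delicate point you flag --- the common normalization constant $c$ so that $r_k = c\,a_k$ for all $k\in\{0,1,2,3\}$ --- indeed checks out against the orthogonal representation of $S_3$ listed in the paper's appendix, and your handling of the $d=2$ degeneration ($R_-=0$, $r_-\equiv 0$) matches how the paper later specializes the lemma to qubits.
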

\noindent Identifying $r_0, r_1, r_2, r_3$ with the identity and the Pauli matrices,
the last constraint corresponds to the (subnormalized) Bloch ball of a single qubit.
 The characterization in Lemma~\ref{lem:EW_pos} requires six variables; we now reduce this to three variables:
 First, we show that we may assume that $\varrho$ has real entries,
 and then we show that for real $\varrho$, three variables suffices.

\subsection{Real weights}
Consider a unitary-invariant three-qubit system on particles $1$, $2$, and~$3$. It can be expanded as
\begin{equation}\label{eq:3qb}
 \varrho_{123} = \zeta \one + a {(12)}
                            + b {(23)}
                            + c {(13)}
                            + d {(123)}
                            + e {(132)}\,,
\end{equation}
where $a,b,c,d,e \in \C$.
Now note that the swap operators ${(12)}, {(13)}, {(23)}$ are all Hermitian.
Thus $\expect{{(ij)}} \in \R$.
But also $\expect{{(ijk)}} \in \R$: The QMC Hamiltonian is real, $H^T = H$.
This implies that also its ground state satisfies $\varrho=\varrho^T$.
As a consequence,
\begin{equation}\label{eq:real_3cycles}
    \tr\big(\varrho {(123)} \big)^*
 =  \tr\big(\varrho^\dag {(123)}^\dag \big)
 =  \tr\big(\varrho {(123)}^T \big)
 =  \tr\big(\varrho^T {(123)} \big)
 =  \tr\big(\varrho {(123)} \big)\,,
\end{equation}
where we have used that the permutation operators are orthogonal and that $\pi^T = {\pi^{-1}}$.
From the last equalities also $\expect{{(123)}} = \expect{{(132)}} \in \R$, which we use shortly.

\subsection{Three-qubit identity} \label{sec:simplifications}
On the vector space $(\C^2)^{\ot 3}$, the relation
\begin{equation}\label{eq:3identity}
 \id - {(12)}- {(13)} - {(23)} + {(123)} + {(132)} = 0
\end{equation}
generates all nontrivial identities among the permutation operators~\cite{Procesi2007LieGroups}.
  This simplifies the Werner-Eggeling basis further:
  From the matrix identity \Cref{eq:3identity} one has $r_- = \tr(R_- \varrho) =0$.
  From \Cref{eq:real_3cycles} follows that $\expect{(123)} = \expect{(132)}$, and thus $r_3~=~\tr(R_3 \varrho)~=~0$.
  Via \Cref{eq:3identity}, the remaining operators $R_+, R_0, R_1, R_2$ simplify for QMC to:
  \begin{align}
    R_+ &= \frac{1}{3} \big( (12) + (13) + (23) \big) \,,\nn\\
    R_0 &= \frac{1}{3} \big( 3\id - (12) - (13) - (23) \big)  \,,\nn\\
    R_1 &= \frac{1}{3} \big( 2(23) - (12) - (13) \big) \,,\nn\\
    R_2 &= \frac{1}{\sqrt{3}} \big((12) - (13) \big) \,.
  \end{align}

\subsection{Lieb-Mattis and Parekh-Thompson conditions}
For QMC, Lemma~\ref{lem:EW_pos} simplifies to:

\begin{corollary}\label{cor:sym_red}
 A real unitary-invariant operator in the span of $R_0,R_1,R_2,R_+$ corresponds to a three-qubit state,
 if and only if the following conditions are met:
 \begin{align}\label{eq:conditions_red}
  0 \leq r_+ \leq 1\,, \quad\quad r_+ + r_0 = 1 \,, \quad\quad r_1^2 + r_2^2 \leq r_0^2\,.
 \end{align}
 \end{corollary}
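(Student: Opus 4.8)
The plan is to read Corollary~\ref{cor:sym_red} off as the restriction of the Eggeling--Werner characterization (Lemma~\ref{lem:EW_pos}) to the QMC-relevant subspace, in which two of the six parameters already vanish. Concretely, I would first argue that for a real, unitary-invariant operator lying in $\Span(R_0, R_1, R_2, R_+)$ one automatically has $r_- = 0$ and $r_3 = 0$. This is precisely what the two preceding subsections establish: the three-qubit relation \eqref{eq:3identity} forces $r_- = \tr(\varrho R_-) = 0$, and realness of $\varrho$ together with \eqref{eq:real_3cycles} gives $\expect{(123)} = \expect{(132)}$, whence $r_3 = \frac{i}{\sqrt 3}\big(\expect{(123)} - \expect{(132)}\big) = 0$. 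The equivalence between ``having no $R_-$ and $R_3$ component'' and ``$r_- = r_3 = 0$'' rests on the orthogonality of the six operators $R_+, R_-, R_0, R_1, R_2, R_3$ under the trace inner product, which I would verify explicitly: $R_+, R_-, R_0$ are orthogonal projectors onto distinct isotypic components, while $R_1, R_2, R_3$ are supported on (and traceless within) the $R_0$-block and mutually orthogonal, so that $\tr(\varrho R_k)$ genuinely isolates the $k$-th coordinate.

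Then I would simply substitute $r_- = 0$ and $r_3 = 0$ into the three conditions of Lemma~\ref{lem:EW_pos}. The nonnegativity constraints $r_+, r_-, r_0 \geq 0$ reduce to $r_+ \geq 0$ and $r_0 \geq 0$; the normalization $r_+ + r_- + r_0 = 1$ becomes $r_+ + r_0 = 1$; and the Bloch-ball constraint $r_1^2 + r_2^2 + r_3^2 \leq r_0^2$ becomes $r_1^2 + r_2^2 \leq r_0^2$. These match \eqref{eq:conditions_red} once we check that $\{r_+ \geq 0,\ r_0 \geq 0\}$ together with $r_+ + r_0 = 1$ is equivalent to $0 \leq r_+ \leq 1$: indeed $r_0 = 1 - r_+$, so $r_0 \geq 0 \iff r_+ \leq 1$, while $r_+ \geq 0$ is retained directly. (Observe also that $R_+ + R_- + R_0 = \one$, so the normalization $r_+ + r_0 = 1$ correctly encodes $\tr\varrho = 1$.)

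Because both directions of Lemma~\ref{lem:EW_pos} are stated as an ``if and only if'' and the substitution is an equivalence on the restricted subspace, the corollary follows in both directions without further work. The only genuinely non-cosmetic point --- and hence the step I would be most careful about --- is the orthogonality claim that legitimizes reading $r_- = r_3 = 0$ off the span condition; everything else is a direct specialization of an already-proved lemma.
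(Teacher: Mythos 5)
Your proposal is correct and follows essentially the same route as the paper: the corollary is obtained by specializing Lemma~\ref{lem:EW_pos}, using the three-qubit identity \eqref{eq:3identity} to get $r_-=0$ (on $(\C^2)^{\ot 3}$ the operator $R_-$ vanishes identically, so no orthogonality argument is even needed there) and realness via \eqref{eq:real_3cycles} to get $r_3=0$, then substituting into the six-parameter conditions. Your extra checks (orthogonality of the $R_k$, the equivalence of $\{r_+,r_0\ge 0,\ r_++r_0=1\}$ with $0\le r_+\le 1$, and $R_++R_-+R_0=\one$ encoding normalization) are sound and merely make explicit what the paper leaves implicit.
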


Now write $x = \expect{(12)}$, $y = \expect{(23)}$, and $z = \expect{(23)}$. Assume $\langle \one \rangle = 1$.
Then Corollary~\ref{cor:sym_red} reads:
\begin{align}
   0 \leq x + y + z &\leq 3\,, \label{eq:LM}\tag{LM}\\
   (x^2 + y^2 + z^2) - 2(xy + xz + yz) + 2(x + y + z) &\leq 3\,, \label{eq:PT}\tag{PT}
   \end{align}
The bound ~\eqref{eq:LM} is known as Lieb-Mattis inequality and corresponds to $0\leq r_+ \leq 1$,
and ~\eqref{eq:PT} as the Parekh-Thompson triangle inequality corresponding to $r_1^2 + r_2^2 \leq r_0^2 = (1-r_0)^2$.
The latter was discovered
through an analysis of the second level of the quantum Lasserre hierarchy in Pauli polynomials~\cite[Lemma 7]{parekh2022optimal},
but as we have shown here, it can also be obtained as a consequence of~\cite{PhysRevA.63.042111}.
We have thus shown the following:
\corLMPT*

To see that any single of these constraints are sufficient (for trace one operators),
consider the projectors $R_+$ and $R_0$ onto the subspaces
corresponding to the partitions $[3]$ and $[2,1]$ respectively.
One can check that $\tfrac{1}{2}R_+ - \tfrac{1}{4}R_0$ is a trace one operator satisfying \eqref{eq:PT} but not \eqref{eq:LM},
while $\tfrac{1}{12}\big(R_0 + 2(12)\big)$ is a trace one operator satisfying \eqref{eq:LM} but not \eqref{eq:PT}.

Note that Thm.~\ref{thm:LM_PT} assumes the operator to be of trace one:
there exist non-normalized operators (e.g. $-R_0$) that satisfy both inequalities
but that are not positive semidefinite.
However, if for a given set of $x,y,z\in \R$ \eqref{eq:LM} and \eqref{eq:PT} are satisfied,
then there exists an operator in the hyperplane of trace one that is a state.

A natural question is how these conditions
generalize to the case of $k$-body marginals for dimensions $d > 2$.
As we show in Section~\ref{sec:k-body},
every unitary-invariant $k$-body reduced density matrix can
be decomposed into irreps of the symmetric group $S_k$.
Demanding positivity in each irreducible component with overall trace one yields
necessary and sufficient conditions
for an operator to be a state.
These are generally positive semidefinite cone constraints,
which in the case of real $2\times 2$ matrices reduce to a second order cone.
Further simplifications, that is reducing the number of variables, can be made when $d<k$
by including identities that correspond to partitions
that lie in the kernel of the representation of the symmetric group.
In Section~\ref{sec:simplifications} this is done in the case of three qubits,
for which the representation of the partition $[1,1,1]$ vanishes.

\section{Second order cone relaxation}
\label{sec:soc}

We now use the representation of a real unitary-invariant three-qubit state
to provide a relaxation to QMC
corresponding to optimizing over mutually constituent three-qubit marginals.

\subsection{Compatible marginals}
Consider now an instance of quantum Max Cut on a graph $(G,E)$ with $n$ vertices.
We demand that the approximation to the ground state satisfies:
\begin{enumerate}
\item all three-party subsystems $\varrho_{ijk}$ are density matrices on $(\C^2)^{\ot 3}$;
  \item every pair of three-qubit subsystems are compatible on their joint two-qubit reductions.
\end{enumerate}
We arrive at the following semidefinite programming relaxation to the optimization in Eq.~\eqref{eq:qmc_hamil}:
\begin{align}\label{eq:relax1}
 \underset{\{\varrho_{ijk}\}}{\min} \quad &-\frac{1}{2}\sum_{(i,j) \in E} \tr\big((\one - \swap_{ij}) \varrho_{ij}\big)\nn\\
 \text{subject to}        \quad &\varrho_{ij} = \tr_k (\varrho_{ijk})\,, \nn\\
                                &\varrho_{ijk} \geq 0 \,\nn\\
                                &\tr(\varrho_{ijk} ) = 1\,, \nn\\
                                & \tr_k(\varrho_{ijk}) = \tr_\ell(\varrho_{ij\ell})\,, \quad \quad k \neq \ell\,.
\end{align}
where the sum and constraints are over all pairwise different $i,j,k \in \{1,\dots, n\}$.
This relaxation contains $\binom{n}{3}$ complex SDP variables of size $8\times 8$.
We now show how it can be symmetry reduced to $\binom{n}{2}$ real variables,
with $\binom{n}{3}$ linear programming constraints
and $\binom{n}{3}$ second-order cone constraints.

\subsection{Symmetry reduction}
We now reduce the size of the coupled SDP constraints.
 To every pair $(i,j)$ of vertices associate a variable
 $\E_{ij} \in \R$ representing the expectation value $\langle \swap_{ij} \rangle$.
 If $\E_{ij} = -1$ then $\varrho_{ij}$ is in the singlet state.
 The $\swap$ operator has eigenvalues $\{+1,-1\}$ and thus we have the constraint $-1\leq \E_{ij} \leq 1$.
It follows from \Cref{thm:LM_PT} that
the semidefinite program of \Cref{eq:relax1} can be relaxed to:

\begin{align}\label{eq:relax2}
 \underset{\{\E_{ij}\}}{\min} \quad & - \frac{1}{2} \sum_{(i,j) \in E} \big(1- \E_{ij}\big) \nn\\
  \text{subject to}\quad    &\quad\,0 \leq \E_{ij} + \E_{jk} + \E_{ik} \leq 3\,,\nn\\
                            & \quad\,(\E_{ij}^2 + \E_{ik}^2 + \E_{jk}^2) - 2(\E_{ij}\E_{ik} + \E_{ij}\E_{jk} + \E_{ik}\E_{jk}) + 2(\E_{ij} + \E_{ik} + \E_{jk}) \leq 3
\end{align}
\noindent As desired, this SDP yields
a set of three-qubit density matrices
that are mutually consistent on all two-qubit marginals,
and whose objective value lower bounds the ground state energy of quantum Max Cut.

\subsection{Second order cone relaxation}

A second order cone program (SOCP)  has form~\cite{LOBO1998193}
\begin{align}\label{eq:soc_prog2}
 \underset{x}{\min} \quad & f^T x \nn\\
 \text{subject to}  \quad & \enorm{A_i x + b_i} \leq c_i^T x + d_i\,, \quad i=1,\dots, N\,,
\end{align}
where $\enorm{\cdot}$ is the Euclidean norm.
The variable $x\in\R^n$,
and the problem parameters are
$f \in \R^n$,
$A_i \in \R^{n_i-1 \times n}$,
$b_i \in \R^{n_i-1}$,
$c_i \in \R^n$, and
$d_i \in \R$.
Equality constraints of the form $A_i x = -b_i$ can be included by setting $c_i = d_i = 0$ in~\eqref{eq:soc_prog2}.

 Note that Eq.~\eqref{eq:relax2} [respectively
 the last constraint in Eq.~\eqref{eq:conditions_red}] is a second order cone constraint,
 \begin{equation}
  ||A_{ijk} \vec \E_{ijk} + b_{ijk} || \leq c_{ijk} \vec \E_{ijk} + d_{ijk}\,, \quad 1\leq i < j < k \leq n\,.
 \end{equation}
 with
 \begin{equation}
  \vec \E_{ijk} = \begin{pmatrix}
                  \E_{ij} \\
                  \E_{ik} \\
                  \E_{jk}
                 \end{pmatrix}\,,
  \quad
  A_{ijk} = \frac{1}{3}
  \begin{pmatrix}
   -1  & -1 & 2 \\
   \sqrt{3}  & -\sqrt{3}  & 0
  \end{pmatrix}\,,
  \quad
  c_{ijk} =  \frac{1}{3}
  \begin{pmatrix}
   -1 \\ -1 \\ -1
  \end{pmatrix}\,,
  \quad
  b_{ijk} = d_{ijk} = 0\,.
  \end{equation}

This gives the following result:
\thmSOC*

\subsection{Why SOCP}
On a conceptual level, why does a second order cone constraint
appear in our relaxation?
The reason is that the positivity condition of a single three-qubit marginal
decomposes into a linear constraint on the $[3]$ partition,
and a positive semidefinite constraint on
the $2$-dimensional $[2,1]$ partition,
\begin{equation}
 \begin{pmatrix}
  r_0+r_1 & r_2 \\
  r_2 & r_0 - r_1
 \end{pmatrix}
 \succeq 0\,.
\end{equation}
By the determinant test is is clear that this constraint is equivalent to $r_1^2 + r_2^2 \leq r_0^2$ with $r_0-r_1 \geq 0$
and $r_0 + r_1 \geq 0$.

\section{Approximation algorithm}
\label{sec:approx}

We will find it convenient to cast QMC as maximization problem, by negating the terms in  \eqref{eq:qmc_hamil}, so that energies are nonnegative.

\subsection{Relaxation}

The quantum Lasserre hierarchy indexing by polynomials in Pauli matrices can be used to strengthen the SOCP relaxation.
Consider three moment matrices with entries
\begin{align}
 M^X_{ij} &= \tr(X_i X_j \varrho) \,, &
 M^Y_{ij} &= \tr(Y_i Y_j \varrho) \,, &
 M^Z_{ij} &= \tr(Z_i Z_j \varrho) \,,
\end{align}
where $X_i$, $Y_i$, $Z_i$ are single Pauli operators on position $i$.
It is clear that $M^X, M^Y, M^Z \succeq 0$. Then consider their sum
$ M = M^X + M^Y + M^Z$,
which by \Cref{eqn:SWAP} has the properties
\begin{align}
 M\succeq 0\,,\quad M_{ii} = 3\,,\quad M_{ij} = 2\E_{ij} - 1\,.
\end{align}

With this additional semidefinite constraint, the relaxation~\eqref{eq:relax2} is strenghtened to:

\begin{align}\label{eq:Pauli1}
 \underset{\E_{ij}}{\max} \quad & \frac{1}{2} \sum_{(i,j) \in E} \big(1- \E_{ij}\big) \nn\\
 \text{subject to}\quad     &\quad\,0 \leq \E_{ij} + \E_{jk} + \E_{ik} \leq 3\,,\nn\\
                            & \quad\,(\E_{ij}^2 + \E_{ik}^2 + \E_{jk}^2) - 2(\E_{ij}\E_{ik} + \E_{ij}\E_{jk} + \E_{ik}\E_{jk}) + 2(\E_{ij} + \E_{ik} + \E_{jk}) \leq 3\nn\\
 & \quad\, M\succeq 0 \,, \quad
 M_{ii} = 3 \,,\quad
 M_{ij} = 2\E_{ij} -1\,.
\end{align}

\subsection{Algorithm and approximation guarantee}

The recent design of approximation algorithms for Quantum Max Cut has exposed new connections between convex programming relaxations of local Hamiltonians and monogamy of entanglement.
In particular, the second level of the quantum Lasserre hierarchy in Pauli polynomials enforces the Lieb-Mattis inequality on stars in the interaction graph
(this inequality for stars with two edges follows from \eqref{eq:LM})~\cite{https://doi.org/10.4230/lipics.icalp.2021.102}.
This star bound has been instrumental in designing approximation algorithms for Quantum Max Cut,
and seems to be a necessary ingredient in obtaining good approximation algorithms using entangled states.
We will work with with variables $y_{ij} = \frac{1}{2}(1-x_{ij}) = \tr(\frac{1}{2}(\one - \swap_{ij}) \varrho)$, for a state $\varrho$, corresponding to the negative of Quantum Max Cut Hamiltonian terms \eqref{eq:qmc_hamil}.
The $y_{ij}$ variables will also correspond to relaxed values in the context of our SOCP and SDP relaxations.
The analog of equations \eqref{eq:LM} and \eqref{eq:PT} in $y_{ij}$ variables are

\begin{align}
   &0 \leq y_{ij}+y_{jk}+y_{ik} \leq \frac{3}{2} \label{eq:LM'}\tag{LM'}\\
   &y_{ij}^2+y_{jk}^2+y_{ik}^2 \leq 2(y_{ij} y_{jk}+y_{ij} y_{ik}+y_{jk}y_{ik})\label{eq:PT'}\tag{PT'}
   \end{align}

\begin{lemma}[Star bound, Theorem 11 in~\cite{https://doi.org/10.4230/lipics.icalp.2021.102}] Let $y_{ij} \in [0,1]$ for all $(i,j) \in E$ correspond to values on Quantum Max Cut Hamiltonian terms from the second level of the quantum Lasserre hierarchy in Pauli polynomials.
Then for each $i \in V$,
 \begin{equation}
 \sum_{j \in N(i)} y_{ij} \leq \frac{|N(i)| + 1}{2},
 \end{equation}
where $N(i) = \{j \,:\, (i,j) \in E \}$ denotes the neighborhood of $i$.  Equivalently,
\begin{equation}\label{eq:star_bound_matching}
-\sum_{j \in N(i)} x_{ij} \leq 1.
\end{equation}
\end{lemma}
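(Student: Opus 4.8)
The lemma packages two claims: the star inequality $\sum_{j\in N(i)}y_{ij}\le \tfrac12(|N(i)|+1)$, and its restatement $-\sum_{j\in N(i)}x_{ij}\le 1$. The restatement is immediate from $y_{ij}=\tfrac12(1-x_{ij})$: summing over the neighbours gives $\sum_{j}y_{ij}=\tfrac12\bigl(|N(i)|-\sum_j x_{ij}\bigr)$, which is $\le\tfrac12(|N(i)|+1)$ precisely when $-\sum_j x_{ij}\le 1$. The substance is therefore the inequality itself, i.e.\ the cited level-$2$ star bound; the plan is to reprove it directly as a monogamy statement about the star Hamiltonian, phrased through the level-$2$ pseudo-expectation.

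I would fix the centre $i$, write $d=|N(i)|$, and let $\widetilde{\mathbb E}$ denote the functional attached to a feasible level-$2$ Pauli moment matrix: it is linear, normalised ($\widetilde{\mathbb E}[\one]=1$), respects every operator identity between Pauli polynomials of degree $\le 4$, and satisfies $\widetilde{\mathbb E}[p^\dagger p]\ge 0$ for all Pauli polynomials $p$ of degree $\le 2$. Setting $x_{ij}=\widetilde{\mathbb E}[\swap_{ij}]$, abbreviating $\vec\sigma_a\cdot\vec\sigma_b=X_aX_b+Y_aY_b+Z_aZ_b$, I introduce the weight-$2$ Hermitian star operator $H_\star=\sum_{j\in N(i)}\vec\sigma_i\cdot\vec\sigma_j=\vec\sigma_i\cdot\vec J$ with collective leaf operator $\vec J=\sum_{j\in N(i)}\vec\sigma_j$. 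Since $\swap_{ij}=\tfrac12(\one+\vec\sigma_i\cdot\vec\sigma_j)$, the target $-\sum_j x_{ij}\le 1$ is equivalent to $\widetilde{\mathbb E}[H_\star]\ge-(d+2)$ — that is, to the relaxation honouring the least eigenvalue of the star Hamiltonian.

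The key step is an exact operator identity obtained from $\sigma^\alpha\sigma^\beta=\delta_{\alpha\beta}\one+i\sum_\gamma\epsilon_{\alpha\beta\gamma}\sigma^\gamma$ at site $i$ together with $(\vec J\times\vec J)=2i\,\vec J$ on the leaves, which give $H_\star^2=|\vec J|^2-2H_\star$, equivalently
\[
  (H_\star+\one)^2=|\vec J|^2+\one .
\]
I would then invoke Cauchy--Schwarz for $\widetilde{\mathbb E}$ applied to the degree-$2$ operator $H_\star+\one$ (legitimate only from level $2$ on): $\bigl(\widetilde{\mathbb E}[H_\star]+1\bigr)^2\le\widetilde{\mathbb E}\bigl[(H_\star+\one)^2\bigr]=\widetilde{\mathbb E}[|\vec J|^2]+1$. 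Finally I bound $\widetilde{\mathbb E}[|\vec J|^2]=3d+\sum_{j\ne k}\widetilde{\mathbb E}[\vec\sigma_j\cdot\vec\sigma_k]\le d(d+2)$, using $\widetilde{\mathbb E}[\vec\sigma_j\cdot\vec\sigma_k]=2\widetilde{\mathbb E}[\swap_{jk}]-1\le 1$, where $\widetilde{\mathbb E}[\swap_{jk}]\le 1$ follows from $\widetilde{\mathbb E}[\one-\swap_{jk}]=2\,\widetilde{\mathbb E}[(P^{-}_{jk})^\dagger P^{-}_{jk}]\ge 0$ for the singlet projector $P^{-}_{jk}=\tfrac14(\one-\vec\sigma_j\cdot\vec\sigma_k)$, a degree-$2$ operator. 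Combining, $(\widetilde{\mathbb E}[H_\star]+1)^2\le(d+1)^2$, hence $\widetilde{\mathbb E}[H_\star]\ge-(d+2)$, which is the claim.

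The main obstacle is precisely this reliance on level $2$: both positivity inputs — the Cauchy--Schwarz bound on the degree-$2$ operator $H_\star+\one$ and the singlet-projector inequality $\widetilde{\mathbb E}[\swap_{jk}]\le 1$ — require squares of weight-$2$ Pauli polynomials and are unavailable from the level-$1$ moment matrix, matching the paper's remark that only the two-edge star case follows from \eqref{eq:LM} and the SOCP. The one point needing care is that $\widetilde{\mathbb E}$ assigns equal values to the two sides of the degree-$4$ identity $(H_\star+\one)^2=|\vec J|^2+\one$; this is exactly the moment-matrix consistency constraint $\Gamma(E,F)=\Gamma(K,H)$ whenever $E^\dagger F=K^\dagger H$, so it holds for any feasible level-$2$ solution.
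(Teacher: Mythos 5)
Your proposal is correct, but note that the paper itself never proves this lemma: it is imported verbatim as Theorem~11 of the cited Parekh--Thompson work, so there is no in-paper proof to compare against. What you have supplied is a self-contained derivation, and it checks out. The operator identity is right: from $\sigma_i^\alpha\sigma_i^\beta=\delta_{\alpha\beta}\one+i\sum_\gamma\epsilon_{\alpha\beta\gamma}\sigma_i^\gamma$ and $\vec J\times\vec J=2i\vec J$ one gets $H_\star^2=|\vec J|^2-2H_\star$, hence $(H_\star+\one)^2=|\vec J|^2+\one$; Cauchy--Schwarz against $\one$ (valid because the level-2 moment matrix is PSD and indexed by all weight-$\le 2$ Pauli monomials, with the degree-4 identity enforced by the consistency constraints $\Gamma(E,F)=\Gamma(K,H)$); and the leaf bound $\widetilde{\mathbb E}[|\vec J|^2]\le 3d+d(d-1)=d(d+2)$ via the singlet-projector inequality $\widetilde{\mathbb E}[\swap_{jk}]\le 1$, which indeed needs level 2 since $(P^-_{jk})^\dagger P^-_{jk}$ is a square of a weight-2 polynomial. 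Combining gives $(\widetilde{\mathbb E}[H_\star]+1)^2\le(d+1)^2$, i.e.\ $\widetilde{\mathbb E}[H_\star]\ge-(d+2)$, which under $x_{ij}=\tfrac12(1+\widetilde{\mathbb E}[\vec\sigma_i\cdot\vec\sigma_j])$ is exactly $-\sum_j x_{ij}\le 1$, and the $y$-form follows by the affine substitution you give. This is essentially the angular-momentum/sum-of-squares mechanism underlying the original reference's proof, so you have not found a shortcut, but you have made explicit what the paper leaves implicit: precisely which level-2 features (degree-4 consistency and squares of weight-2 polynomials) are consumed, which corroborates the paper's remark that its own level-1-plus-SOCP relaxation can only certify the star bound on two-edge stars via \eqref{eq:LM}. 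One small presentational caveat: your Cauchy--Schwarz step only yields the lower branch $\widetilde{\mathbb E}[H_\star]+1\ge-(d+1)$ after taking square roots; stating that you discard the upper branch (which is the trivial direction) would make the last line airtight.
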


Approximation algorithms for Quantum Max Cut hinge on a selection of an appropriate ansatz for output states.  For product states, an $\alpha$-approximation algorithm for Quantum Max Cut, with $\alpha \approx 0.498$, is best possible~\cite{hwang2023unique}, and an $\alpha$-approximation is possible using the first level of the Pauli-based quantum Lasserre hierarchy~\cite{gharibian_et_al:LIPIcs:2019:11246}.  An optimal $\frac{1}{2}$-approximation using product states is possible by employing the star bound and the Parekh-Thompson triangle inequality~\cite{parekh2022optimal}.  Anshu, Gosset, and Morenz~\cite{https://doi.org/10.4230/lipics.tqc.2020.7} observed that the $\frac{1}{2}$-approximation barrier could be surmounted by using tensor products of 1- and 2-qubit states, which can be produced by finding matchings\footnote{A matching of $G$ is a subset of edges such that no two edges in the subset share a common vertex.} in the interaction graph $G$.  Given a matching $M$, placing singlets on edges $e \in M$ yields a state.  The currently best known 0.595-approximation by Lee and Parekh is obtained by simply taking the better of the state obtained from a maximum matching in $G$ or the product state obtained by the Gharibian-Parekh $\alpha$-approximation on a solution from the second level of the Pauli hierarchy~\cite{lee2024improved}.

The star bound, particularly in the form of \eqref{eq:star_bound_matching}, illustrates connections to matching, where for a matching $M$, we must have, for $i \in V$, $\sum_{j \in N(i)} m_{ij} \leq 1$  if $m_{ij} \in \{0,1\}$ are variables indicating $ij \in M$.  Our SDP or SOCP relaxations do not imply the star bound, and in absence of it, proving approximation guarantees beyond 0.498 is challenging using existing techniques~\cite{gharibian_et_al:LIPIcs:2019:11246,https://doi.org/10.4230/lipics.icalp.2021.102,parekh2022optimal,lee2022optimizing,King2023improved,lee2024improved}.  As with previous approaches, our algorithm outputs a product state as well as a state derived from a matching.  However, in addition we introduce a new technique that allows us to get a 0.526-approximation with respect to our SDP relaxation \eqref{eq:Pauli1}.  Previous approaches have used SDP values on edges to decide how to round to a quantum state.  While these values depend on the input graph $G$, the relationship is not direct.  Ideally one would be able to show that a set of SDP values resulting in a bad approximation guarantee cannot occur as an optimal SDP solution for an actual input graph.  Our analysis attempts to mimic this by keeping track of the fraction of edges with low, medium, or high SDP value and allows these fractions to be chosen adversarially.


\begin{algorithm}
\caption{ SOCP Approximation Algorithm}\label{alg:approx}
\begin{algorithmic}[1]
\Require Graph $G=(V, E)$, edge weights $\{w_{ij}\}_{ij\in E}$, threshold $t>3/4$
\State Solve \eqref{eq:Pauli1} using graph $G$ and edge weights $\{w_{ij}\}_{ij\in E}$ to obtain optimal variable values $(M, \{y_{ij}\}_{ij\in E}\})$.
\State Construct $S=\{ij \in E: y_{ij} > t \}$ (guaranteed to be a matching by \Cref{eq:star_bound_matching} since $y_{ij} > 3/4$ implies $-x_{ij} > 1/2$)
\State Compute Cholesky decomposition of $M$ to obtain vectors $\{\ket{v_i}\in \mathbb{R}^{|V|}\}$ such that $M_{ij}=\langle v_i|v_j\rangle$.
\State Sample $R\in \mathbb{R}^{3 \times |V|}$ by sampling matrix elements $R_{ij}$ as i.i.d. standard normal random variables.
\State For each $k\in V$ define $(\theta_X^k, \theta_Y^k, \theta_Z^k)= R \ket{v_k}/||R\ket{v_k}||_2$.
\State Define $\varrho_S=\prod_{ij\in S } \left( \frac{\mathbb{I}-X_i X_j -Y_i Y_j-Z_i Z_j}{4} \right)\prod_{k:ik\notin S \,\,\forall i}\left( \frac{\mathbb{I}+\theta_X^k X_k+\theta_Y^k Y_k+\theta_Z^k Z_k}{2}\right)$
\State Define $\varrho_{prod}=\prod_{k\in V} \left( \frac{\mathbb{I}+\theta_X^k X_k+\theta_Y^k Y_k+\theta_Z^k Z_k}{2}\right)$.
\State \Return $\max\{\tr(H\varrho_S), \tr(H \varrho_{prod})\}$
\end{algorithmic}
\end{algorithm}

We obtain the following approximation result.

\begin{theorem}\label{thm:approx}
For $t=0.771$, \cref{alg:approx} is a 0.526  approximation algorithm for Quantum Max Cut.
\end{theorem}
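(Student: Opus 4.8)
The plan is to bound the expected value of the returned state against the optimum of the relaxation~\eqref{eq:Pauli1}; since that optimum upper bounds the (maximization) ground energy, a factor of $0.526$ there proves the theorem. Normalize the relaxation value to $\mathrm{SDP}=\sum_{(i,j)\in E}w_{ij}y_{ij}$. As the algorithm returns the larger of $\tr(H\varrho_S)$ and $\tr(H\varrho_{prod})$, it suffices to lower bound, for a mixing weight $\lambda\in[0,1]$ fixed later,
\[
\mathbb{E}\big[\max\{\tr(H\varrho_S),\tr(H\varrho_{prod})\}\big]\ \geq\ \lambda\,\mathbb{E}[\tr(H\varrho_S)]+(1-\lambda)\,\mathbb{E}[\tr(H\varrho_{prod})].
\]
First I would pin down the product rounding. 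Since $M\succeq0$ with $M_{ii}=3$ and $M_{ij}=2x_{ij}-1$, the Cholesky vectors once normalized satisfy $\langle\hat v_i,\hat v_j\rangle=(2x_{ij}-1)/3=:\rho_{ij}$, and the Gaussian projection to $\mathbb{R}^3$ with normalization is exactly the Bloch-vector rounding of Gharibian--Parekh: the resulting unit Bloch vectors $\theta^k=(\theta_X^k,\theta_Y^k,\theta_Z^k)$ obey $\mathbb{E}[\theta^i\cdot\theta^j]=I(\rho_{ij})$ for an explicit (hypergeometric) function $I$. In the $y$-variables each edge then contributes in expectation $P(y_{ij}):=\tfrac14\big(1-I(\rho_{ij})\big)$ to $\tr(H\varrho_{prod})$, and $\min_y P(y)/y$ reproduces the known product-state ratio $\approx0.498$, attained near $y\approx0.95$.

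Next I analyze the matching state. The high band $S=\{ij:y_{ij}>t\}$ is a matching: two incident high edges $(i,j),(i,k)$ would give $y_{ij}+y_{ik}>2t>\tfrac32$, violating the two-edge star bound $y_{ij}+y_{ik}\le\tfrac32$ that follows from~\eqref{eq:LM'} on the triangle $ijk$ (using $y_{jk}\ge0$). Placing a singlet on each $e\in S$ yields value $1$ on matched edges; any edge incident to but not in $S$ has one maximally mixed endpoint and hence value exactly $\tfrac14$; edges disjoint from $S$ keep their product value $P(y_{ij})$. Letting $\partial S$ denote the edges incident to but not lying in $S$, the convex combination rearranges to
\begin{align*}
\lambda\,\mathbb{E}[\tr(H\varrho_S)]+(1-\lambda)\,\mathbb{E}[\tr(H\varrho_{prod})]
&=\sum_{(i,j)\in E}w_{ij}P(y_{ij})+\lambda\!\!\sum_{(i,j)\in S}\!\!w_{ij}\big(1-P(y_{ij})\big)\\
&\quad-\lambda\!\!\sum_{(i,j)\in\partial S}\!\!w_{ij}\big(P(y_{ij})-\tfrac14\big),
\end{align*}
a surplus concentrated on the matching and a penalty on the edges it touches.

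It remains to certify that this is at least $0.526\,\mathrm{SDP}$. Taking $t=0.771$ to be the value where the bare product ratio $P(y)/y$ first falls to $0.526$, every free edge ($y\le t$) already meets the target, and on matched edges ($y>t$) the surplus $\lambda(1-P(y))$ pushes the ratio above $0.526$ once $\lambda$ is at least the value that lifts the product minimum near $y\approx0.95$ up to $0.526$. The obstruction is the penalty $\lambda(P(y)-\tfrac14)$ on $\partial S$: enlarging $\lambda$ to rescue the matched edges simultaneously deepens this penalty, so the two demands on $\lambda$ very nearly conflict. Crucially, and unlike all previous analyses, we cannot invoke the full star bound $\sum_{j\in N(i)}y_{ij}\le(|N(i)|+1)/2$ --- our relaxation supplies only its two-edge restriction --- so we cannot cap the weight incident to the matching and discharge the penalty locally. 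I expect reconciling these two demands to be the main obstacle.

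The resolution I would pursue is the adversarial fraction accounting. The two-edge star bound couples each incident edge to its matched partner through $y_{\partial S}\le\tfrac32-y_S$: the most heavily penalized incident edges, of value near $\tfrac32-t$, can attach only to matched edges of value near $t$, which carry the largest surplus, whereas matched edges near the product-worst $y\approx0.95$ (thin surplus) admit incident edges only of value $\le\tfrac32-0.95$, where the penalty is harmless. To turn this into a bound I would partition the edge weight into low/medium/high SDP-value bands, let the fraction of weight in each band and the realized values be chosen adversarially subject only to the constraints actually available --- the band ranges, $y\in[0,1]$, and the coupling $y_{\partial S}\le\tfrac32-y_S$ --- and minimize the ratio of the displayed expression to $\mathrm{SDP}$. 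This collapses the guarantee to a low-dimensional optimization, whose minimum I would certify to equal exactly $0.526$ at $t=0.771$ and the matching optimal $\lambda$. The technical crux is choosing the band breakpoints and coupling inequalities sharply enough that the adversary cannot assemble a configuration of value below $0.526\,\mathrm{SDP}$, together with rigorous numerical control of the hypergeometric curve $P$ at the breakpoints.
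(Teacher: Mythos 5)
Your proposal follows essentially the same route as the paper's proof: the same two candidate states (singlets on the high band $S$ versus the Gharibian--Parekh product state), the same three-way edge partition (matched / adjacent / free) with adversarially chosen weight fractions, the same facts about the hypergeometric rounding curve (its $0.498$ floor and its monotonicity, which the paper proves as Lemma~\ref{lem:tech_rounding}\ref{lem:tech_rounding_item3} and which you also need for "first falls to $0.526$"), and the same reduction to a low-dimensional linear optimization. The only differences are cosmetic: your fixed mixing weight $\lambda$ is the LP dual of the paper's min--max over the fractions $\alpha,\beta,\gamma$, and your LM-based coupling $y_{\partial S}\le 3/2-t=0.729$ plays the role of the paper's PT-based bound $t'=\tfrac14\big(3-2t+2\sqrt{3}\sqrt{t-t^2}\,\big)\approx 0.728$, which is numerically indistinguishable at $t=0.771$.
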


\onote{The $R_{ij}$ are just random Guassians and remain the same, but I did change everything to lowercase $y_{ij}$ for consistency.}

\begin{lemma}\label{lem:tech_rounding}
Let $\{y_e\}_{e\in E}$ and $M$ be derived from a feasible solution to \eqref{eq:Pauli1}.  Given a feasible solution let $\{(\theta_X^k, \theta_Y^k, \theta_Z^k)\}_{k\in V}$ be sampled as in \Cref{alg:approx} and define
$$
F(x):= \frac{1}{4x}\left(1-\frac{8}{9 \pi} \,_2 F_1\left(\frac{1}{2}, \frac{1}{2};\frac{5}{2};\frac{1}{9}(1-4 x)^2\right)\right),
$$ 
where $\,_2F_1$ is the Hypergeometric function.
\begin{enumerate}
\item\label{lem:tech_rounding_item1}~\cite{gharibian_et_al:LIPIcs:2019:11246, v010a004} $$
\mathbb{E}_R \,\, \left(\frac{1-\theta_X^i \theta_X^j-\theta_Y^i \theta_Y^j-\theta_Z^i \theta_Z^j}{4 Y_{ij}}\right)=F\left(\frac{1-M_{ij}}{4}\right).
$$
\item\label{lem:tech_rounding_item2} ~\cite{gharibian_et_al:LIPIcs:2019:11246} $F(x) \geq 0.498$ for $x\in (0, 1]$.
\item\label{lem:tech_rounding_item3}  $F(x)$ is monotonically decreasing in the interval $(0, 4/5]$.
\item\label{lem:tech_rounding_item4} ~\cite{lee2024improved} If $y_{ij} \geq 3/4$ then $y_{jk} \leq 1/4 (3-2 y_{ij}+2 \sqrt{3} \sqrt{y_{ij}-y_{ij}^2})$.
\end{enumerate}
\end{lemma}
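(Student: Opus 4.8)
The plan is as follows. Items~\ref{lem:tech_rounding_item1}, \ref{lem:tech_rounding_item2}, and \ref{lem:tech_rounding_item4} are taken from the cited references, so I would only recall them briefly: item~\ref{lem:tech_rounding_item1} is the standard identity for projecting two vectors onto three independent Gaussian directions and renormalizing, item~\ref{lem:tech_rounding_item2} is the Gharibian--Parekh $0.498$ bound, and item~\ref{lem:tech_rounding_item4} is the pairwise constraint from~\cite{lee2024improved}. Concretely, item~\ref{lem:tech_rounding_item1} identifies $\mathbb{E}_R[\theta_X^i\theta_X^j+\theta_Y^i\theta_Y^j+\theta_Z^i\theta_Z^j]=\beta(M_{ij})$, where $\beta(M)=\tfrac{8}{9\pi}M\,{}_2F_1(\tfrac12,\tfrac12;\tfrac52;M^2/9)$ is the three-dimensional rounding value at the normalized overlap $M_{ij}/3$ (recall $M_{ii}=3$), so that with $x=(1-M_{ij})/4$ one has $F(x)=(1-\beta(1-4x))/(4x)$. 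The only genuinely new claim is item~\ref{lem:tech_rounding_item3}, the monotonicity of $F$, and this is where I would concentrate.

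For item~\ref{lem:tech_rounding_item3} I would reparametrize by $M=1-4x$, which ranges over $[-\tfrac{11}{5},1)$ as $x$ ranges over $(0,\tfrac45]$, and observe that $F$ is decreasing in $x$ iff $\tilde F(M):=(1-\beta(M))/(1-M)$ is increasing in $M$. Differentiating, the sign of $\tilde F'(M)$ equals the sign of the numerator $P(M):=(1-\beta(M))-(1-M)\,\beta'(M)$. Using $\tfrac{d}{dz}\,{}_2F_1(a,b;c;z)=\tfrac{ab}{c}\,{}_2F_1(a{+}1,b{+}1;c{+}1;z)$ to compute $\beta'$ and simplifying, the two occurrences of $g(z):={}_2F_1(\tfrac12,\tfrac12;\tfrac52;z)$ combine into one and yield
\[
 P(M) = 1 - \tfrac{8}{9\pi}\Big[\, g(z) + \tfrac{2M^2(1-M)}{9}\, g'(z)\,\Big], \qquad z = M^2/9,
\]
with $g'(z)=\tfrac{1}{10}\,{}_2F_1(\tfrac32,\tfrac32;\tfrac72;z)$. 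Thus the claim reduces to showing that the bracketed quantity stays strictly below $\tfrac{9\pi}{8}$ on the whole parameter range.

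To finish, I would exploit that $g$ and $g'$ have nonnegative Taylor coefficients, hence are positive and increasing on $[0,1)$, while on $M\in[-\tfrac{11}{5},1)$ the prefactor $\tfrac{2M^2(1-M)}{9}$ is nonnegative and, together with $z=M^2/9$, is maximized at the left endpoint $M=-\tfrac{11}{5}$ (the interior critical points of $M^2(1-M)$ at $M=0,\tfrac23$ give far smaller values). Since every factor in the bracket is then monotone in $|M|$, the bracket attains its maximum at $z_0=\tfrac{121}{225}$, and it suffices to verify, at $M=-\tfrac{11}{5}$, that $g(z_0)+\tfrac{2M^2(1-M)}{9}\,g'(z_0)<\tfrac{9\pi}{8}$. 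This holds with a wide margin: the left side is $\approx 1.6$ versus $\tfrac{9\pi}{8}\approx 3.53$. Rigorously I would obtain it by summing a handful of terms of the two positive hypergeometric series and bounding the tails geometrically, using $z_0<1$.

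The main obstacle is purely this endpoint estimate: converting the numerical inequality at $z_0$ into rigorous bounds on ${}_2F_1(\tfrac12,\tfrac12;\tfrac52;z_0)$ and ${}_2F_1(\tfrac32,\tfrac32;\tfrac72;z_0)$. Because $z_0=\tfrac{121}{225}$ lies well away from the singularity at $1$ and the slack is so large, only a crude ratio bound on the series tail is needed rather than any delicate hypergeometric identity. A minor bookkeeping point is to confirm that the bracket is genuinely maximized at the endpoint rather than in the interior, which follows from the monotonicity of $g,g'$ in $z=M^2/9$ together with the explicit maximization of $M^2(1-M)$ noted above.
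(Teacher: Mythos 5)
Your proposal is correct, and for the only item that genuinely needs a proof (item (3)) it takes a route whose key step differs from the paper's. The paper differentiates $F$ directly in $x$: it writes $dF/dx = g(x)/(180\pi x^2)$ with $g(x) = 32xq^2\,{}_2F_1(\tfrac32,\tfrac32;\tfrac72;q^2) + 40\,{}_2F_1(\tfrac12,\tfrac12;\tfrac52;q^2) - 45\pi$ and $q=(1-4x)/3$, then differentiates \emph{again}: $dg/dx$ is a negative multiple of $x(1-4x)$ times a positive-coefficient hypergeometric combination, so $g$ decreases on $[0,\tfrac14]$ and increases on $[\tfrac14,\tfrac45]$, hence $g\le\max\{g(0),g(4/5)\}$, and both endpoint values are checked to be negative. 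You instead reparametrize by $M=1-4x$, reduce the sign of the derivative to $P(M)=1-\tfrac{8}{9\pi}\bigl[g(z)+\tfrac{2M^2(1-M)}{9}g'(z)\bigr]\ge 0$ with $z=M^2/9$ (your algebra here checks out), and bound the bracket by maximizing each monotone factor separately: $g,g'$ are increasing on $[0,1)$ by positivity of the series coefficients, and $M^2(1-M)$ is maximized on $[-\tfrac{11}{5},1)$ at the endpoint $M=-\tfrac{11}{5}$, so a single evaluation at $z_0=121/225$ suffices, with slack roughly $1.6$ versus $9\pi/8\approx 3.53$. Your route avoids the second differentiation and the two-endpoint unimodality argument at the price of a coarser (factor-wise) bound, which the wide margin easily absorbs; both proofs ultimately rest on the same ingredients, namely the contiguous-derivative identity for ${}_2F_1$, nonnegativity of the hypergeometric series coefficients, and a rigorous finite-sum-plus-geometric-tail evaluation at one or two points. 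One point worth flagging: you implicitly (and correctly) work with $F(x)=\tfrac{1}{4x}\bigl(1-\tfrac{8}{9\pi}(1-4x)\,{}_2F_1(\tfrac12,\tfrac12;\tfrac52;\tfrac19(1-4x)^2)\bigr)$, i.e., with the factor $(1-4x)$ multiplying the hypergeometric; the lemma's displayed formula omits this factor, but that is evidently a typo, since without it items (1) and (2) would be false (for instance $F(1)$ would equal $1/6<0.498$), and the paper's own derivative formula is precisely the derivative of the version that includes the factor. So your reduction matches the intended statement, and the paper's proof and yours are proving the same inequality.
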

\begin{proof}
Only \Cref{lem:tech_rounding_item3} requires a proof\footnote{Technical lemmas very similar to \Cref{lem:tech_rounding_item3} were used in other works~\cite{hwang2023unique, parekh2022optimal, https://doi.org/10.4230/lipics.icalp.2021.102}, but we could not find anything which explicitly implies \Cref{lem:tech_rounding_item3}.}.  Let $q$ be shorthand for $\frac{1}{3}(1-4x)$.  Using the known derivative $\frac{d}{dz}\,_2 F_1(a, b;c; z)=\frac{ab}{c} \,_2 F_1(a+1, b+1;c+1; z)$~\cite{abramowitz1948handbook} we may compute
\begin{align}
\frac{dF}{dx}=\frac{32 x q^2 \, _2F_1\left(\frac{3}{2},\frac{3}{2};\frac{7}{2};q^2\right)+40 \, _2F_1\left(\frac{1}{2},\frac{1}{2};\frac{5}{2};q^2\right)-45 \pi }{180 \pi  x^2} =: \frac{g(x)}{180 \pi  x^2}.
\end{align}
We can see that $\frac{dF}{dx} \leq 0$ for all $x\in (0, 4/5]$ if and only if $g(x) \leq 0$ for all $x\in (0, 4/5]$.  We may compute
\begin{align}
\frac{dg}{dx}=-\frac{384}{7} x (1-4x) \left(3 q^2 \, _2F_1\left(\frac{5}{2},\frac{5}{2};\frac{9}{2};q^2\right)+7 \, _2F_1\left(\frac{3}{2},\frac{3}{2};\frac{7}{2};q^2\right)\right).
\end{align}
By the series expansion $\,_2 F_1 (z) =\sum_{n=0}^\infty \frac{(a)_n (b)_n z^n}{(c)_n n!} $, one can see that $ (3 q^2 \, _2F_1(\frac{5}{2},\frac{5}{2};\frac{9}{2};q^2)+$ $7 \, _2F_1 (\frac{3}{2},\frac{3}{2};\frac{7}{2};q^2))$ is non-negative where defined since $\,_2 F_1 (z^2)$ is a series of non negative terms.  
Note $x\leq 4/5$ by assumption so $\,_2 F_1(5/2, 5/2;9/2, 1/9(1-4x)^2)$ is always defined.  Hence $dg/dx \leq 0$ for $x \leq 1/4$ and $dg/dx \geq 0$ for $1/4 \leq x \leq 4/5$.  It follows that $g(x) \leq \max\{g(0), g(0.8)\}$ for $x \in [0, 4/5]$.  $g(0)$ and $g(4/5)$ can be computed to show $g(x) \leq 0$ for all $x\in [0, 4/5]$ which implies $dF/dx \leq 0$ for all $x\in (0, 4/5]$.  This implies $F$ is monotonically decreasing in this interval.
\end{proof}

\begin{proof}[Proof of \Cref{thm:approx}]
For each edge $e$ let $G_e^S$ be the expected value of edge $e$ obtained by the rounding algorithm for the state $\varrho_S$ and let $G_e^{prod}$ be defined analogously.  Let $S$ be the set of edges with $y_e >t$, $T$ be the set of edges adjacent to $S$, and $U$ be the set of edges not included in and not adjacent to $S$.  Since $t=0.771$, by \Cref{lem:tech_rounding} \Cref{lem:tech_rounding_item4}, edges $e\in T$ have values $y_e \leq t' =1/4 (3-2 t+2 \sqrt{3} \sqrt{t-t^2})\approx 0.728$.

First consider the approximation ratio for $\varrho_{prod}$.
\begin{align}
\frac{\sum_e w_e G_e^{prod}}{\sum_e w_e y_e}=\sum_e \left(\frac{w_e y_e}{\sum_{e'} w_{e'} y_{e'}} \right) \frac{G_e^{prod}}{y_e}=\sum_e \alpha_e \frac{G_e^{prod}}{y_e}\\
\nonumber =\sum_{e\in S} \alpha_e \frac{G_e^{prod}}{y_e}+\sum_{e\in T} \alpha_e \frac{G_e^{prod}}{y_e}+\sum_{e\in U} \alpha_e \frac{G_e^{prod}}{y_e}.
\end{align}
By \Cref{lem:tech_rounding} \Cref{lem:tech_rounding_item1} $G_e^{prod}/y_e=F(y_e)$.  Since $F(x)$ is monotonically decreasing by \Cref{lem:tech_rounding} \Cref{lem:tech_rounding_item3} we may lower bound $G_e^{prod}/y_e \geq F(t)$ for $e\in U$ and $G_e^{prod}/y_e \geq F(t')$ for $e\in T$.  By \Cref{lem:tech_rounding} \Cref{lem:tech_rounding_item2} we may lower bound $G_e^{prod}/y_e \geq 0.498$ for $e\in S$.  Hence,
\begin{align}\label{eq:low1}
\frac{\sum_e w_e G_e^{prod}}{\sum_e w_e y_e} \geq \alpha 0.498 + \beta F(t') +\gamma F(t),
\end{align}
for $\alpha=\sum_{e\in S} \alpha_e$, $\beta=\sum_{e\in T} \alpha_e$ and $\gamma=\sum_{e\in U} \alpha_e$.

Now consider $\varrho_S$.  $G_e^S/y_e \geq 1$ for $e\in S$ since $\varrho_S$ is a singlet along that edge, $G_e^S/y_e \geq 1/4$ for $e \in T$ since one endpoint of edge $e$ is involved in a singlet state and $G_e^S/y_e \geq F(t)$ for $e\in U$ by \Cref{lem:tech_rounding} \Cref{lem:tech_rounding_item1}.  Similar reasoning yields
\begin{align}\label{eq:low2}
\frac{\sum_e w_e G_e^{S}}{\sum_e w_e y_e} \geq \alpha +\frac{\beta}{4}+\gamma F(t).
\end{align}

If we denote $G_e$ as the expected edge value from the overall algorithm then 
\begin{align}
\frac{\sum_e w_e G_e}{\sum_e w_e y_e} \geq \frac{\max\{\sum_e w_e G_e^{prod}, \sum_e w_e G_e^S \}}{\sum_e w_e y_e}\geq \max\left\{ \frac{\sum_e w_e G_e^{prod}}{\sum_e w_e y_e}, \frac{\sum_e w_e G_e^S}{\sum_e w_e y_e} \right\}.
\end{align}
By \eqref{eq:low1} and \eqref{eq:low2},
\begin{align}
\frac{\sum_e w_e G_e}{\sum_e w_e y_e} \geq \max\{0.498 \alpha+\beta F(t')+\gamma F(t), \alpha +\frac{\beta}{4}+\gamma F(t)\}.
\end{align}
Since $\sum_e \alpha_e=1$ and each $\alpha_e \geq 0$ we can lower bound the approximation factor for the overall algorithm as 
\begin{align}
r:=\min_{\substack{\alpha+\beta+\gamma=1\\ \alpha, \beta, \gamma \geq 0}} \,\, \max\{0.498 \alpha+\beta F(t')+\gamma F(t), \alpha +\frac{\beta}{4}+\gamma F(t)\}.
\end{align}
$r$ may be solved with a linear program
\begin{align}
r=& \min_{\alpha, \beta, \gamma, s} s&\\
\nonumber \text{subject to}\,\,\,\,\,\,& 0.498 \alpha+\beta F(t')+\gamma F(t) \leq s\\
\nonumber &\alpha +\frac{\beta}{4}+\gamma F(t) \leq s\\
\nonumber &\alpha +\beta +\gamma =1\\
\nonumber &\alpha, \beta, \gamma \geq 0.
\end{align}
We obtain $r=0.526$.

\end{proof}

\section{k-body relaxations}
\label{sec:kbody}

\label{sec:k-body}
We generalize the three-qubit state constraints to $k$-body reduced density matrices with $d\geq 2$.
Theorem~\ref{thm:pos} characterizes positivity of an invariant operator directly from its expectation values
and might be of independent interest.

\subsection{Weingarten expansion}
One can expand a unitary-invariant operator as~\cite{CollinsSniady2006, procesi2020note}
\begin{equation}\label{eq:Weingarten_expansion}
 A = \sum_{\sigma \in S_k} \tr(\sigma^{-1} A) \sigma \Wg(d,n)\,,
\end{equation}
where the Weingarten {\em operator} is given by
\begin{align}
 \Wg(d,k) &= \frac{1}{k!} \sum_{\substack{\lambda \vdash k \\ |\lambda| \leq d}} \frac{\chi_\lambda(\id)}{s_{\lambda,d}(1)} p_\lambda\,.
\end{align}
Here $\chi_\lambda$ is the character of the symmetric group,
$s_{\lambda,d} = s_\lambda(1,\dots, 1)$ ($d$ times) the Schur polynomial,
and the central Young projectors defined by
\begin{equation}\label{eq:young}
 p_\lambda = \frac{\chi_\lambda(\id)}{k!} \sum_{\pi \in S_k} \chi_\lambda(\pi^{-1}) \pi\,.
\end{equation}
we can write the Weingarten {operator} as
\begin{equation}
 \Wg(k,d) = \frac{1}{(k!)^2}
 \sum_{\substack{\lambda \vdash k \\ |\lambda| \leq d}}
 \frac{\chi_\lambda(\id)^2}{s_{\lambda,d}(1)}
 \sum_{\pi_\in S_k} \chi_\lambda(\pi^{-1}) \pi \,.
\end{equation}

With this we can expand Eq.~\eqref{eq:Weingarten_expansion} as
\begin{align}\label{eq:exp_value_expansion}
 A &= \sum_{\sigma, \tau \in S_k} \tr(\sigma^{-1} A) \tau \Wg(\tau^{-1} \sigma, d)
\end{align}
where, using the fact that $\chi_\lambda(\pi) = \chi_\lambda(\pi^{-1})$, the Weingarten {\em function} is
\begin{equation}\label{eq:weingarten_elementwise_expansion}
 \Wg(\pi, d) =
 \frac{1}{(k!)^2}
 \sum_{\substack{\lambda \vdash k \\ |\lambda| \leq d}}
 \frac{\chi_\lambda(\id)^2}{s_{\lambda,d}(1)}
 \chi_\lambda(\pi)\,.
\end{equation}
Expansion ~\eqref{eq:exp_value_expansion} has the drawback that it requires to compute the Weingarten operator respectively function.
We address this in Section~\ref{sec:direct}, Theorem~\ref{thm:pos}.

\subsection{Symmetry reduction}

Let $A \in \Sigma_{k,d}$ and $R_\lambda$ be an orthogonal representation corresponding to $\lambda$.
By the Schur-Weyl decomposition in Eq.~\eqref{eq:SW},
\begin{equation}
 T \mapsto \bigoplus_{\lambda \vdash k\,,  |\lambda| \leq d} R_\lambda
\end{equation}
is a $*$-isomorphism. These preserves positivity~\cite{Bachoc2012}, and thus
\begin{equation}\label{eq:psd_decomp}
A = \sum_{\pi \in S_k} a_\pi T(\pi) \succeq 0
\quad \iff \quad
\bigoplus_{\substack{\lambda \vdash k \\ |\lambda| \leq d}}
 \sum_{\pi \in S_k} a_\pi R_\lambda(\pi) \succeq 0\,.
\end{equation}
In terms of the expectation values $\tr(\pi A)$,
the coefficients $a_\pi$ are given in terms of expectation values $\langle \sigma^{-1}\rangle = \tr(\sigma^{-1} A)$ as
\begin{equation}\label{eq:a_coeffs}
 a_\pi = \sum_{\sigma \in S_k} \tr(\sigma^{-1} A) \Wg(\pi^{-1} \sigma, d)\,.
\end{equation}

\subsection{Direct decomposition}
\label{sec:direct}
Eq.~\eqref{eq:a_coeffs} suggests that Weingarten function has to be evaluated to test for positivity given expectation values.
We now show how a more direct decomposition can be made which avoids this.
The key idea is that the Weingarten operators scales each isotypic component
$\lambda$ by a positive scalar, and that $R_\mu(p_\lambda) = \delta_{\mu\lambda} \one$.
Now use the fact that $p_\lambda$ is in the center of the group ring $\C[S_n]$,
 \begin{align}
 A &= \sum_{\sigma \in S_k} \tr(\sigma^{-1} A) \sigma \Wg(d,n) \nn\\
   &= \frac{1}{k!} \sum_{\substack{\lambda \vdash k \\ |\lambda| \leq d}} \frac{\chi_\lambda(\id)}{s_{\lambda,d}(1)} p_\lambda
   \sum_{\sigma \in S_k} \tr(\sigma^{-1} A) \sigma
\end{align}
Then
\begin{equation}\label{eq:pos_in_irrep}
 A \succeq 0 \quad \iff \quad
  \bigoplus_{\substack{\mu \vdash k \\ |\mu| \leq d}}
  \frac{1}{k!} \sum_{\substack{\lambda \vdash k \\ |\lambda| \leq d}} \frac{\chi_\lambda(\id)}{s_{\lambda,d}(1)}
  R_\mu\Big(p_\lambda \sum_{\sigma \in S_k} \tr(\sigma^{-1} A) \sigma \Big) \succeq 0\,.
\end{equation}
We simplify this expression in two steps:
First note that the centrally primitive idempotents project onto the isotypic components labeled by $\lambda$
and thus $R_\lambda(p_{\mu}) = \one \delta_{\lambda\mu}$.
This can be seen by the fact that $p_\lambda$ projects onto a minimal left ideal of $\C S_n$,
while $R_\mu$ is an irreducible representation of $S_n$.
Under a representation, left ideals are mapped to invariant subspaces;
and when the ideal is minimal then its representation is either irreducible or zero.
Thus when $\lambda = \mu$, then
$R_\lambda(p_\lambda)$ acts as the identity onto the subspace $\lambda$,
and when $\lambda \neq \mu$ then $R_\mu(p_\lambda) = 0$.
Consequentially, $ R_\lambda(p_{\mu}) = \one \delta_{\lambda\mu}$.
Second, when $|\lambda| \leq d$ then $\frac{\chi_\lambda(\id)}{s_{\lambda,d}(1)} > 0$ is a positive number.

As a consequence, we have the following simplification of Eq.~\eqref{eq:psd_decomp} respectively Eq.~\eqref{eq:pos_in_irrep}:
\thmPOS*

\subsection{Variable reduction}

The ground state of the Quantum Max Cut Hamiltonian has real coefficients $a_\pi = a_\pi^{-1} \in \R$.
To see this, note that
$H^T = H$. The ground state can thus be chosen to satisfy the same symmetry. Consequently,
$\varrho^T = \varrho$. Then
\begin{equation}
    \tr\big(\varrho \pi \big)^*
 =  \tr\big(\varrho^\dag \pi^\dag \big)
 =  \tr\big(\varrho \pi^T \big)
 =  \tr\big(\varrho^T \pi \big)
 =  \tr\big(\varrho \pi \big)\,.
\end{equation}
Thus the expectation values are real, $\langle \pi \rangle \in \R$.
In Eq.~\eqref{eq:psd_decomp} the expectation values are multiplied by the Weingarten function in Eq.~\eqref{eq:weingarten_elementwise_expansion} which is also real.
Thus the coefficients satisfy $a_\pi = a_{\pi^{-1}} \in \R$ for all $\pi \in S_k$.

A more systematic variable reduction can be done through the identities generated
by the projector $R_{-}$ onto the anti-symmetric subspace of three systems.
An alternative is to use the {\em Good Permutations} basis~\cite{procesi2020note}
works for general $d$ but is overcomplete in the case of symmetric matrices.

\section{Numerical results}
\label{sec:numerics}
In this section we detail some numerical results on the performance of our relaxations.  
We primarily investigated (1) the performance of our graph relative to known variation benchmarks~\cite{wu2023variational} (VarBench) (2) the performance/behavior on the Shastry-Sutherland model ~\cite{shastry1981exact} and (3) the performance on random (Erdos-Renyi) graphs.

In the first set of benchmarks we compared our relaxations to known ground states and heuristically derived solutions for a subset of lattices covered in~\cite{wu2023variational}.  
For these results the QMC Hamiltonian is scaled to be traceless, i.e. $H=\sum_{ij \in E} X_i X_j +Y_i Y_j +Z_i Z_j$.
Our calculations were all completed on a mid-range workstation without supercomputing resources in contrast to~\cite{wu2023variational}.  
In particular with a Intel(R) Xeon(R) Gold 6258R processor and 64 GB of RAM.  
We stress additionally that our methods are complementary to~\cite{wu2023variational} since we provide lower rather than upper bounds on the ground state energy (which is normally achieved with heuristics).  
One peculiar feature of the optimal solutions we calculated was that in many cases the optimal solution (in the VarBench scaling) is exactly minus the number of edges.  
In the context of \Cref{eq:relax2} this corresponds to setting all $\E_{ij}$ variables to $1$.  This will always be a feasible solution to these programs so the optimal solution of the relaxation will always be smaller than the number of edges.  We compared the relaxation strengths and computation times of SOC and SOC+P1 to the level-1 SWAP program defined in \cite{takahashi2023su2symmetric, Watts2024relaxationsexact}.  Since our goal was primarily to analyze the relaxations we presented in our paper we set an upper bound of 600 seconds for the level-1 SWAP program.  For examples in which we achieved convergence, the level-1 SWAP program was much slower, often by orders of magnitude.  The level-1 SWAP program is an essential ingredient for best known rigorous approximations \cite{lee2024improved} to QMC so the table indicates that our rounding algorithm \Cref{alg:approx} provides practical runtimes for much larger Hamiltonians than previous work.

We also investigated the Shastry-Sutherland model~\cite{shastry1981exact}.  
This is a $2d$ model with couplings $J$ along the edges of a $2D$ grid and couplings $J_D$ across the diagonal of each face in the grid (see \cite{wietek2019thermodynamic} Eq. $1$ for a definition).  
This model is known to have two phase transitions where the quantum state rapidly changes from one form to another.  
The first is at $J/J_D\approx 0.65$ and the second at $J/J_D \approx 0.75$.  
For $J/J_D\leq 0.65$ the ground state is known to be a tensor product of singlets across the diagonal edges of the lattice, while for $J/J_D \geq 0.75$ the state is known to be in the Néel phase corresponding to the product state which maximizes the terms on the square lattice.
In between the ground state is a transient plaquette state.  
Our relaxations mimicked this behavior (see \Cref{fig:ShastrySutherland}) although our relaxations underwent a ``phase transition'' at different parameter values and the transient regime was much shorter.  We can see evidence of this in \Cref{fig:SS_opt} where we plot the optimal values of different relaxations as well as the ground state energy for a Shastry Sutherland model on $16$ qubits.  Indeed we can notice a cusp in the objective of our relaxation at $J/J_D=0.5$ while the exact energy has a cusp at $J/J_D\approx 0.65$.  We note that below $J/J_D=0.5$ all the relaxations considered are correct.  The ``phase transition'' in our relaxations also apparently corresponds to a point at which the relaxations become inexact similar to what was observed in \cite{takahashi2023su2symmetric}.



Additionally we added disorder to the Shastry-Sutherland model and investigated the objective as a function of disorder.  
For these models we started at parameter setting where $J/J_D=0.4$ and multiplied each coupling term by $(1+\sigma X)$ where $\sigma$ was a parameter we varied and $X$ was a standard normal random variable.  
The numerics indicate that the relaxations we have defined are exact for small amounts of disorder and have good performance for even relatively large amounts of disorder. 
Application of convex optimization hierarchies to studying perturbation theory is a very new and interesting research direction~\cite{hastings2022perturbation} and we provide evidence that even the very simple relaxations we have defined are powerful tools for understanding the change in ground state energy under a small perturbation.  

We also studied the performance of our relaxations on randomly generated Erdos-Renyi graphs with probability $p$ of edge formation for several values of $p$ in \Cref{fig:ER_objectives} and \Cref{fig:er_ratios}.  
We believe these are important instances to consider for applications of our ideas because heuristics which make use of the structure of the Hamiltonian (i.e. those considered in~\cite{wu2023variational}) will fail on these instances.  
For dense instances it appears that the SOC program plus level-$1$ Pauli (denoted $SOC+P1$) has appreciably better performance than just the $SOC$ program \Cref{eq:relax2} and that the level-1 SWAP program has appreciably better performance than the other two.  
This is seen in both \Cref{fig:ER_objectives} and \Cref{fig:er_ratios} (note that the Hamiltonian in \Cref{fig:ER_objectives} is an affine shift of \Cref{fig:er_ratios}).  
For all the instances we considered we saw reasonable performance with our relaxations indicating that the relaxations are getting within $25\%$ of the optimal value, but we cannot probe very big examples due to the exponential dimension.

\begin{table}
 \begin{tabular}{@{} l r r r r r r r@{}}
 \toprule
 model        &  \# edges & SOC    & \thead{SOC\\+Pauli-1} & \thead{Level-1\\SWAP} & VarBench & \thead{Rounded\\Objective} &\thead{Comp. Time SOC/SOC+\\Pauli-1/Level-1 SWAP (s)}  \\
 \midrule
 square   16      &  32       & -64     & -53.3333    & -45.57   & -44.9139$^*$        &-15.67      & 0.05/1.5/1.25      \\
 square   36      &  72       & -144    & -123.4306   & -99.76   & -97.7576$^*$        & -36.01     & 1.4/2.5/256  \\
 square   64      &  128      & -256    & -220.1495   &          &                     & -64.86     & 19/54         \\
 square   50      &  100      & -200    & -172.405    &          & -135.0204$^*$       & -51.04     &6.7/27        \\
 square  100      &  200      & -400    &  -344.950   &          & -268.560\phantom{0} & -101.87    &258/1145      \\
 square  196      &  392      & -784    & -676.932    &          & -523.983            & -199       &21357/20024        \\
 kagome   18      &  36       &  -36    & -36         & -36      &  -32.1931$^*$       & -10.34     &0.08/0.08/1.52     \\
 kagome   48      &  96       &  -96    & -96         &          &  -84.2311$^*$       & -27.36     &2.2/2.7       \\
 kagome   75      &  150      &  -150   & -150        &          &                     & -43.22     &11/16         \\
 kagome  192      &  384      &  -384   & -384        &          & -330.143\phantom{0} & -110.43    &1166/2274     \\
 shuriken 24      &  48       &  -48    & -48         & -48      &  -43.0396$^*$       & -13.72     &0.18/0.28/7.39     \\
 shuriken 96      &  192      &  -192   & -192        &          & -168.2918           & -55        &36/44         \\
 Pyrochlore 32    &   96      & -120    & -109.838    & -82.62   & -66.15$^*$          & -31.77     &2.1/3.2/41       \\
 Pyrochlore 108   &   324     & -405    &  -393.133   &          &  -210.43            & -114.12    &592/646       \\
 Pyrochlore 256   &  768     &  -960    &  -933.512   &          &  -494.69            & -271.22    &25835/31584   \\
 \bottomrule
 \end{tabular}
 \caption{
  {\bf Bounds on QMC} for periodic lattices using the
  scaled Hamiltonian $H = \sum_{(ij) \in E} (XX + YY + ZZ)_{ij}$.
  Lower bounds include the second order cone (SOC), second order cone plus Pauli Level 1 (SOC+Pauli-1) as well as the Level 1 SWAP hierarchy Rounded objective corresponds to the upper bound obtained through \Cref{alg:approx}.
  In the VarBench column the symbol $^*$ indicates that the exact solution was found while the other entries generally correspond to heuristic upper bounds on the ground state energy.  We note that square lattice of size $50$ is defined differently than the other square lattices, see~\cite{PhysRevE.98.033309} Fig.~$5$ for a definition.
  Blank entries under Level-1 SWAP column correspond to lattices which exhausted the computers memory or took longer than the set threshold of 600 seconds to converge. }

\end{table}

\begin{figure}[tbp]
\begin{subfigure}{0.49\textwidth}
\centering
 \includegraphics[width= 1\textwidth]{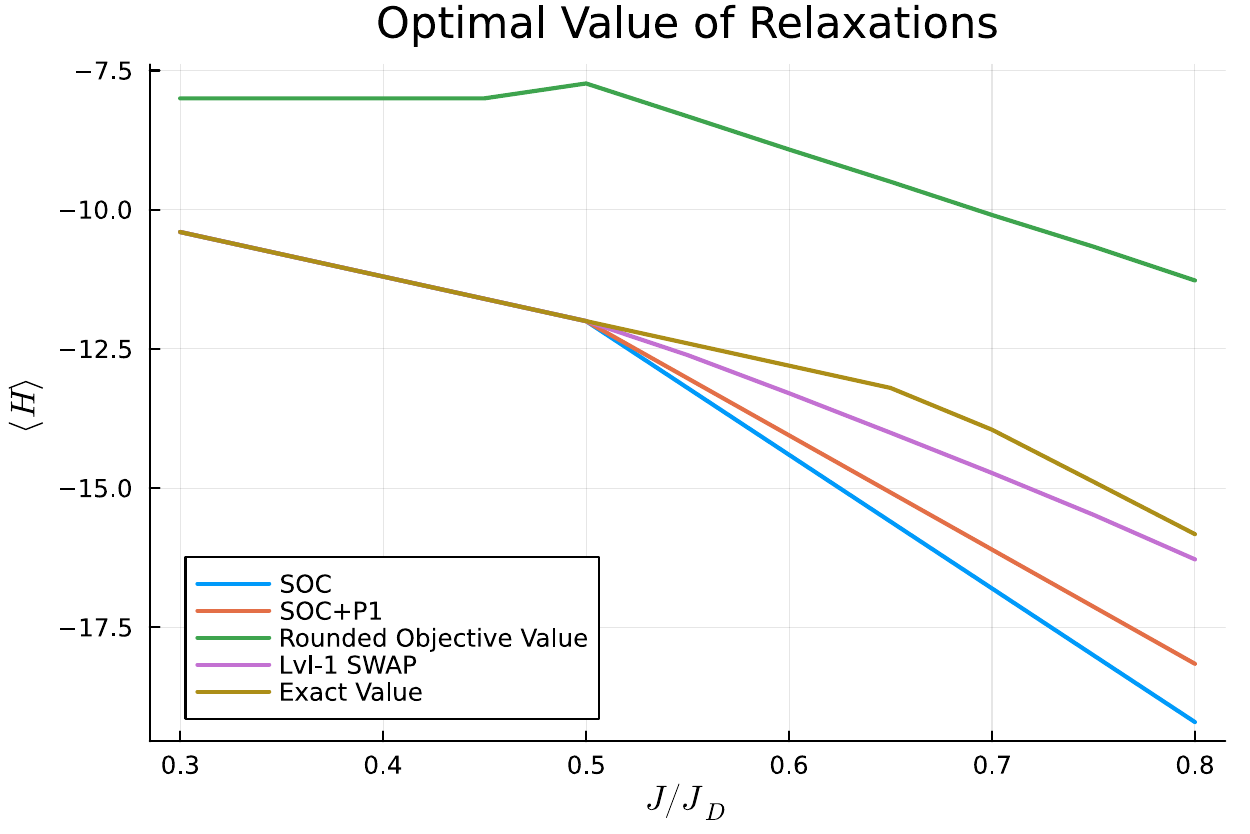}
\end{subfigure}
\begin{subfigure}{0.49\textwidth}
\centering
 \includegraphics[width= 0.75\textwidth]{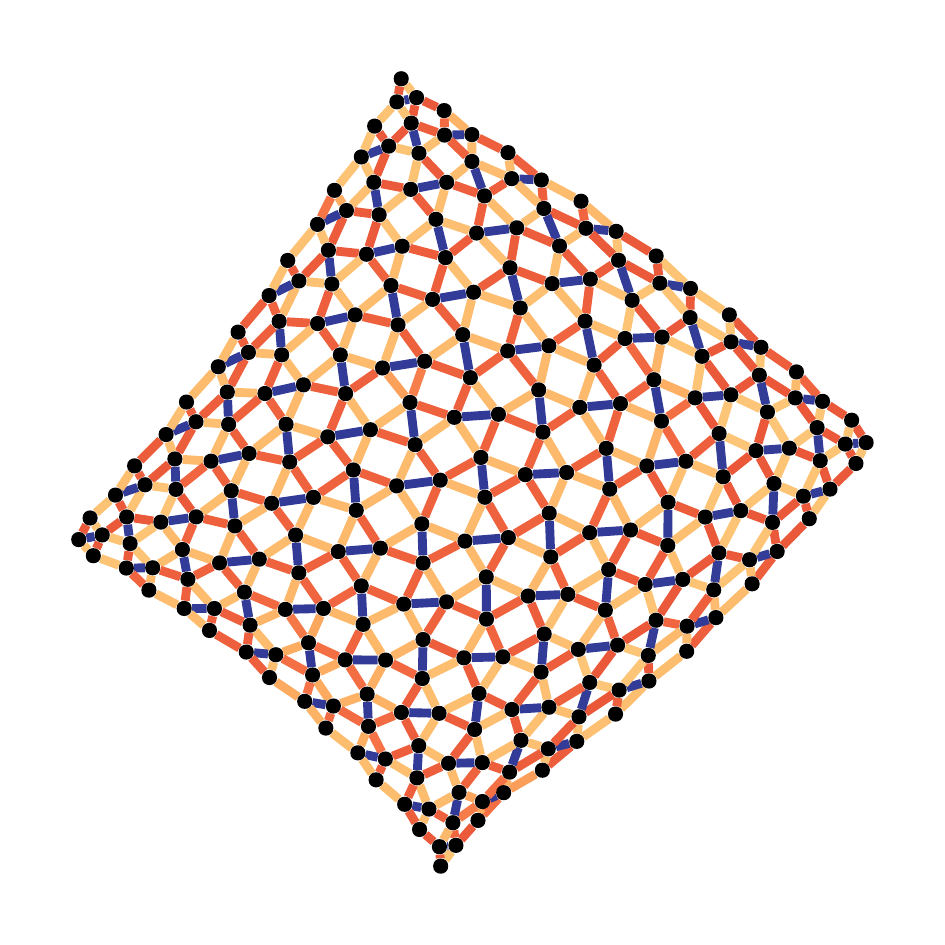}
\end{subfigure}
 \caption{
 \label{fig:SS_opt}
Optimal solutions of our relaxations for Shastry-Sutherland models with scaling as in \Cref{eq:qmc_hamil}.  SOC corresponds to \Cref{eq:relax2} and SOC+P1 corresponds to \Cref{eq:Pauli1}.  On the Left we plot the optimal solutions versus the exact ground state for a lattice of size $16$ qubits where $\langle H \rangle$ indicates the optimal objective whether it be the relaxation or exact solution.  On the right a heat map arising from the solution to a $256$ qubit disordered instance with parameters $J/J_D=0.4$ and $\sigma=0.05$.  
 }
\end{figure}

\begin{figure}[tbp]
\includegraphics[width= 0.6\textwidth]{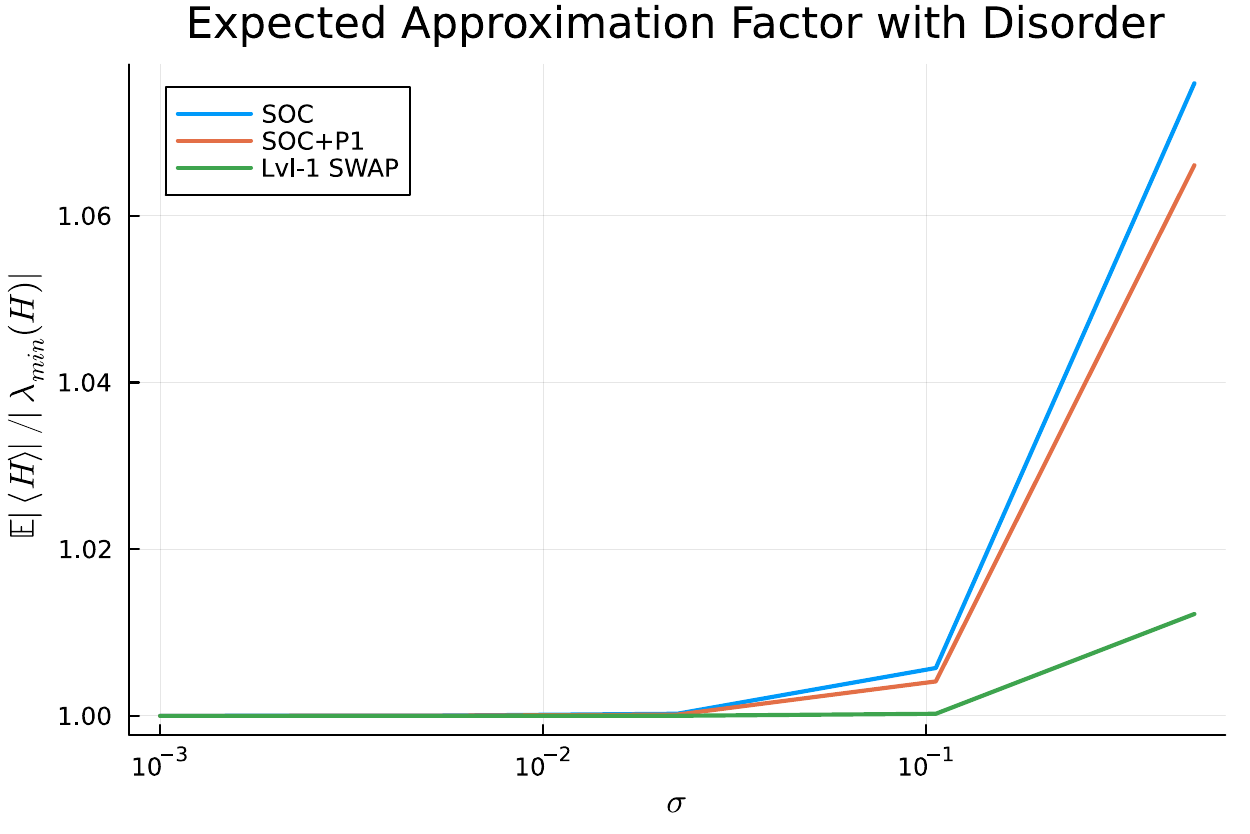}
 \caption{
 \label{fig:SS_disorder}
Expected approximation factor for Shastry-Sutherland model with disorder with scaling as in \Cref{eq:qmc_hamil}.  $\langle H \rangle$ indicates the optimal value for the SOCP (\Cref{eq:relax2}) or the SOCP suppplemented with the first level of the Pauli Hierarchy (\Cref{eq:Pauli1}).  $\lambda_{min}(H)$ is the ground state energy. All nonzero couplings in the model are multiplied by a factor of $(1+\sigma X)$ where $X$ is a normally distributed random variable.  Average ratio is computed with $1000$ samples.  The lattice of size size $16$.
 }
\end{figure}

\begin{figure}[tbp]
\includegraphics[width= 0.6\textwidth]{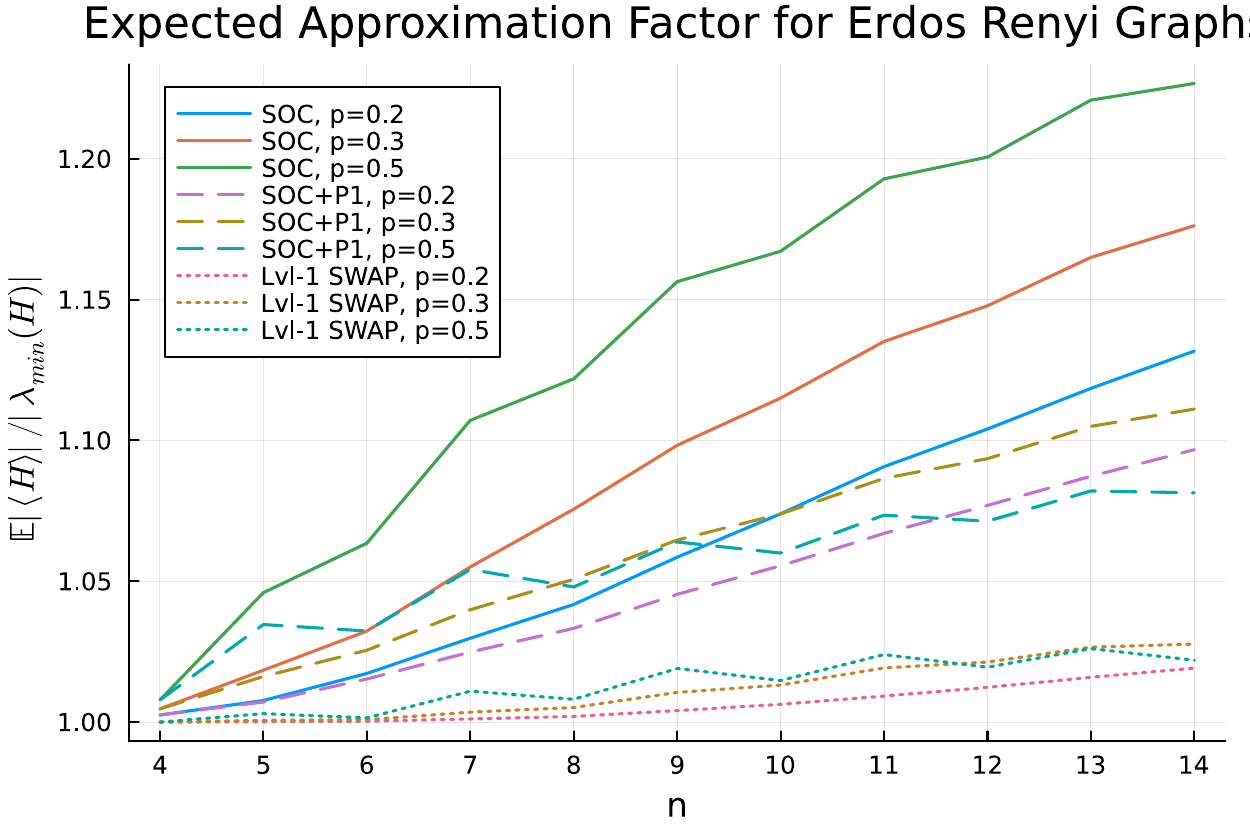}
 \caption{
 \label{fig:ER_objectives}
Expected approximation factor for Erdos-Renyi graphs of varying sizes with scaling as in \Cref{eq:qmc_hamil}.  $\langle H \rangle$ indicates the optimal value for the SOCP (\Cref{eq:relax2}) or the SOCP supplemented with the first level of the Pauli Hierarchy (\Cref{eq:Pauli1}). $\lambda_{min}(H)$ is the ground state energy.  The probability of edge formation is $p$ and the number of qubits in the graph is $n$.
 }
\end{figure}


\begin{table}
 \begin{tabular}{@{}c r r r r r@{}}
 \toprule
 n  & 0.2 & 0.4 & 0.6 & 0.8 & 1 \\
 \midrule
 10 & 1.03  & 1.12  & 1.31  & 1.70  & 3.00 \\
 12 & 1.04  & 1.16  & 1.38  & 1.92  & 3.67 \\
 14 & 1.05  & 1.19  & 1.44  & 2.13  & 4.33 \\
 16 & 1.06  & 1.21  & 1.52  & 2.31  & 5.00 \\
 18 & 1.07  & 1.23  & 1.58  & 2.47  & 5.67 \\
 20 & 1.08  & 1.24  & 1.64  & 2.64  & 6.33 \\
 25 & 1.09  & 1.28  & 1.80  & 2.97  & 8.00 \\
 30 & 1.11  & 1.32  & 1.92  & 3.27  & 9.66 \\
 35 & 1.13  & 1.36  & 2.02  & 3.5   & 11.33 \\
 40 & 1.15  & 1.42  & 2.14  & 3.72  & 13.00  \\
 \bottomrule
 \end{tabular}
 \caption{\label{fig:er_ratios}Average ratio of the value of the SOC / SOC+Pauli-1 relaxations
 for random Erdős–Rényi graphs with $n$ where an edge has probability $p$.
 We used 500 samples for $n\leq 20$; 100 samples for $n\leq 40$.
 In VarBench normalization.}
 \end{table}

\appendix
\section{Four qubits}
\label{app:four}
We discuss the state conditions for four-qubit marginals.

The analog of the relaxation in Eq.~\eqref{eq:relax1} to the quantum max cut problem~\eqref{eq:qmc_hamil}
using mutually consistent four-qubit marginals reads:
\begin{align}\label{eq:relax1_4rdm}
 \underset{\{\varrho_{ijk\ell}\}}{\min} \quad &-\frac{1}{2}\sum_{(i,j)\in E} \tr\big((\one - \swap_{ij}) \varrho_{ij}\big)\nn\\
 \text{subject to}        \quad &\varrho_{ij} = \tr_{k\ell} (\varrho_{ijk\ell})\,, \nn\\
                                &\varrho_{ijk\ell} \geq 0 \,\nn\\
                                &\tr(\varrho_{ijk\ell} ) = 1\,, \nn\\
                                & \tr_\ell(\varrho_{ijk\ell}) = \tr_m(\varrho_{ijkm})\,, \quad \quad \ell \neq m\,.
\end{align}
Above, the sum and constraints are over all pairwise different $i,j,k,\ell,m \in \{1,\dots, n\}$.
Denote by
\begin{align}
 x_{ij} = \tr\big(\varrho (ij) \big)\,, \quad\quad
 x_{ij,kl} = \tr\big(\varrho (ij)(kl) \big)\,.
\end{align}
Due to the identity in Eq.~\eqref{eq:3identity},
all three-body correlations can be expressed in terms of two-body correlations,
\begin{equation}
 \langle (ijk) \rangle + \langle (ikj) \rangle =
 \langle (ij) \rangle +
 \langle (ik) \rangle +
 \langle (jk) \rangle - 1\,.
\end{equation}

This implies that the constraint
$\tr_\ell(\varrho_{ijk\ell}) = \tr_m(\varrho_{ijkm})$
for $\ell \neq m$ is mediated through the $X_{ij}$ only,
that is the only variables shared by two different marginals are contained in the $x_{ij}$.

\subsection{Four-qubit identities}
We now reduce the number of variables further.
An analysis of the Gröbner basis given in Ref.~\cite[Appendix C]{Watts2024relaxationsexact} shows,
in addition to the identity $(123) + (123)^{-1} = (12) + (23) + (13) - \id$~\footnote{
In fact all identities of $S_n$ acting on $(\C^2)^{\otimes n}$
are {\em generated} by $\id - (12) - (23) - (13) + (123) + (123)^{-1} = 0$.}
that the remaining identities for $S_4$ on the subspace of matrices in $\Sigma_{4,2}^{\sym}$
are given by
\begin{equation}\label{eq:S4_identity}
 (1234) + (1234)^{-1} = -\id + (13) + (24) + (12)(34) + (14)(23) - (13)(24)\
\end{equation}
and permutations thereof. See also the analytical derivation in
Ref.~\cite{procesi2024specialbases2swapalgebras}.
Together with the three-qubit identity~in Eq.~\eqref{eq:3identity},
this shows that the set of involutions
of the form $(i,j)$ and $(i,j)(k,l)$ span $\Sigma_{4,2}^{\sym}$.
In fact, a similar result holds for $n$-partite systems:
Procesi has shown that
the space $\Sigma_{n,2}^{\sym}$ is spanned by involutions~\cite{procesi2024specialbases2swapalgebras}.

\subsection{State conditions}

We are now in position to derive the analogue of Eq.~\eqref{eq:conditions_red},
that is the conditions for an operator to be an real invariant four-qubit state.
With Eq.~\eqref{eq:S4_identity}, consider the positivity condition in Eq.~\eqref{eq:pos_in_irrep}
for an invariant operator $A \in \Sigma_{4,2}^{\sym}$
\begin{equation}\label{eq:4qb}
 P_\lambda \sum_{\sigma \in S_k} \tr(\sigma^{-1} A) R_\lambda( \sigma) \succeq 0
\end{equation}
for the irreps $\lambda = (4)$, $(3,1)$, and $(2,2)$.
Appendix~\ref{app:irreps} lists orthogonal representations of $S_4$. Then the three central Young projectors expand as [see Eq.~\eqref{eq:young}]

\begin{align}
 P_{(4)} &= \frac{1}{24} \sum_{\pi \in S_4} \pi \,,
 \nn\\
 P_{(31)} &= \frac{1}{8}
 \Big(3\id + (12) + (13) + (14) + (23) + (24) + (34) - (12)(34) - (13)(24) - (14)(23)  \nn\\
 &\quad\quad - (1234) - (1243) - (1342) - (1324) - (1432) - (1423)
 \Big)\,,
 \nn\\
 P_{(22)} &= \frac{1}{12} \Big(
 2\big( \id + (13)(24) + (14)(23) + (12)(34)\big) \nn\\
 &\quad\quad - (234) - (243) - (123) - (124) - (132) - (134) - (142) - (143)
 \Big) \,. \nn
\end{align}
On ~$(\C^2)^{\otimes 4}$ we can variable reduce
through the identities in Eq.~\eqref{eq:3identity} and \eqref{eq:4qb},
\begin{align}
 P_{(4)} &= \frac{1}{12}\Big(
 -3\id
 + 2\, \big[ (12) + (13) + (14) + (23) + (24) + (34)\big]
 + (12)(34) + (13)(24) + (14)(23)  \Big) \,,
 \nn\\
 P_{(31)} &= \frac{1}{4}\Big(
 3\id - (12)(34) - (13)(24) - (14)(23)
 \Big)\,,
 \nn\\
 P_{(22)} &= \frac{1}{6}\Big(
 3\id - (12) - (13) -  (14) - (23) - (24) - (34) + (12)(34) + (13)(24)  + (14)(23)
 \Big)
 \,. \nn
\end{align}

Denote their expectation values by
\begin{align}\label{eq:r_lambda}
 r_{(4)}   &= \tr( P_{(4)}   \varrho )\,, &
 r_{(3,1)} &= \tr( P_{(3,1)} \varrho )\,, &
 r_{(2,2)} &= \tr( P_{(2,2)} \varrho )\,.
\end{align}

Then the normalization condition $\tr(\varrho) = 1$
requires that
\begin{equation}
 r_{(4)}, r_{(3,1)}, r_{(2,2)} \geq 0\,,\quad\quad  r_{(4)} +  r_{(3,1)} + r_{(2,2)} = 1\,.
\end{equation}

The positive semidefinite constraint can be obtained
through Proposition~\ref{eq:pos_in_irrep_simple}
and using the irreducible orthogonal representations given in Appendix~\ref{app:irreps}.
The conditions for $A \succeq 0$, $A \in \Sigma_{4}^+$ are, listed by partitions:

{

\begin{equation}
-3
+ 2\, \big[ x_{12} + x_{13} + x_{14} + x_{23} + x_{24} + x_{34} \big]
+ x_{12,34} + x_{13,24} + x_{14,23} \geq 0\,; \tag{Partition [4]}
\end{equation}

\begin{equation}\tag{Partition [3,1]}
\begin{pmatrix}
 A_{00} &A_{01} &A_{02} \\
 A_{01} &A_{11} &A_{12} \\
 A_{02} &A_{12} &A_{22}
\end{pmatrix} \geq 0\,,
\end{equation}
where
\begin{align}
 A_{00} &=
 \frac{2}{3} \big(
3
+ 2\, \big[
  x_{12}
+ x_{13}
- x_{14}
+ x_{23}
- x_{24}
- x_{34} \big]
- x_{12,34}
- x_{13,24}
- x_{14,23}
\big)
 \,,\nn\\
 A_{01} &=
 \frac{\sqrt{2}}{3}\,
 \big(
 -2 \, x_{12}
 + x_{13}
 - x_{14}
 + x_{23}
 - x_{24}
 + 2 \, x_{34}
 + 4 \, x_{12,34}
 - 2 \, x_{13,24}
 - 2 \, x_{14,23}
 \big)
 \,,\nn\\
 A_{02} &=
 \frac{\sqrt{6}}{3}\,\big(
- x_{13}
- x_{14}
+ x_{23}
+ x_{24}
+ 2 \, x_{13,24}
- 2 \, x_{14,23}
\big)
 \,,\nn\\
 A_{11} &=
 \frac{2}{3} \,\big(
 3
 + x_{12}
 - x_{34}
 + 2 \, x_{24}
 - 2 \, x_{13}
 + 2 \, x_{14}
 - 2 \, x_{23}
 + x_{12,34}
 - 2 \, x_{13,24}
 - 2 \, x_{14,23}
  \big)
 \,,\nn\\
 A_{12} &=
 \frac{2}{\sqrt{3}}\,\big(
 -x_{13} - x_{14} + x_{23} + x_{24} - x_{13,24} + x_{14,23}
 \big)
 \,,\nn\\
 A_{22} &= 2\,\big(
 1 - x_{12} - x_{12,34} + x_{34}
 \big)
 \,; \nn
\end{align}
and
\begin{equation}\label{eq:partition_31}\tag{Partition [2,2]}
 \begin{pmatrix}
  B_{00} & B_{01} \\
  B_{01} & B_{11}
 \end{pmatrix}
 \geq 0\,,
\end{equation}
where
\begin{align}
B_{00} &= 3
+ x_{12}
- 2 \, x_{13}
- 2 \, x_{14}
- 2 \, x_{23}
- 2 \, x_{24}
+ x_{34}
- x_{12,34}
+ 2 \, x_{13,24}
+ 2 \, x_{14,23} \big)
\,,\nn\\
B_{01} &= \sqrt{3}\big(
- x_{13} + x_{14} + x_{23} - x_{24} + x_{13,24} - x_{14,23}
\big)
\,,\nn\\
B_{11} &= 3\, \big(1 - x_{12} - x_{34} + x_{12,34} \big)\,.
\end{align}
}
These correspond to the fact that a state is positive semidefinite on the irreducible representations corresponding to
$(4), (3,1)$, and $(2,2)$ respectively.

\subsection{Second order cone formulation}
As in the three-qubit case, the condition of \eqref{eq:partition_31}
can be formulated as a second-order cone constraint.
The corresponding subnormalized real Bloch ball reads:
\begin{equation}
 B = \begin{pmatrix}
      b_0 + b_1 & b_2 \\
      b_2          & b_0 - b_1 \\
     \end{pmatrix} \succeq 0\,.
\end{equation}
with $b_0 = (B_{00} + B_{11})/2 $, $b_1 = (B_{00} - B_{11})/2$, $b_2 = B_{01}$. Then
\begin{align}
b_0 &=\frac{1}{6}\Big( 3 - x_{12} - x_{13} - x_{14} - x_{23} - x_{24} - x_{34} + x_{12,34} + x_{13,24}  + x_{14,23} \Big) \,,\nn\\
b_1 &=  2\,x_{12} - 2\,x_{12,34} - x_{13} + x_{13,24} - x_{14}+ x_{14,23} - x_{23} - x_{24} + 2\,x_{34} \nn\nn\\
b_2 &= \sqrt{3}\big( - x_{13} + x_{14} + x_{23} - x_{24} + x_{13,24} - x_{14,23}
\big) \,.
\end{align}
We use the fact that
 \begin{equation}\label{eq:SOC_bloch_B}
  \begin{pmatrix}
   b_0 + b_1 & b_2 \\
   b_2 & b_0 - b_1
  \end{pmatrix} \geq 0
  \quad\quad\iff\quad\quad
  b_1^2 +  b_2^2 \leq b_0^2,\quad b_0+b_1\geq 0,\quad b_0 - b_1\geq 0\,.
 \end{equation}

\begin{remark}
Also the $3\times 3$ semidefinite constraint from the partition $[3,1]$ can be relaxed to a second-order cone constraint
by imposing that all of its $2\times 2$ minors are positive semidefinite.
This does not seem to give a computational speed-up.
\end{remark}

\section{Irreducible Representations of $S_3$ and $S_4$}
\label{app:irreps}

We list the orthogonal irreducible representations $R_\lambda$ of $S_3$ and $S_4$ below.
These are needed to obtain the decomposition in terms of $x_{ij}$ and $x_{ij,kl}$ in Appendix~\ref{app:four}.

\noindent {\bf $\mathbf S_3$.} The symmetric group $S_3$ has three irreps labeled by partitions $[3]$, $[2,1]$, and $[1,1,1]$.
Of these, the first two are non-zero while the sign representation $[1,1,1]$ vanishes on $(\C^2)^{\ot 3}$.

\bigskip
\begin{tabular}{@{}lcccccc@{}}
 \toprule
 $\lambda$ & $\id$ & $(12)$     & $(23)$       & $(13)$       & $(123)$     & $(132)$ \\
 \midrule
 $[3]$ & $1$   & $1$        & $1$          & $1$          & $1$           & $1$ \\
 $[2,1]$
 & $\begin{pmatrix}
    1 & 0 \\
    0 & 1
   \end{pmatrix}$
 & $\begin{pmatrix}
    1 & 0 \\
    0 & -1
   \end{pmatrix}$
 &
 $\tfrac{1}{2}\begin{pmatrix}
   -1 & \sqrt{3} \\     \sqrt{3} &  1
   \end{pmatrix}$
&
$\tfrac{1}{2}\begin{pmatrix}
   -1 & -\sqrt{3} \\   -\sqrt{3} &  1
  \end{pmatrix}$
&
$\tfrac{1}{2}\begin{pmatrix}
   -1 & \sqrt{3}   \\   -\sqrt{3} & -1
  \end{pmatrix}$
&
$\tfrac{1}{2}\begin{pmatrix}
   -1 &  -\sqrt{3} \\    \sqrt{3} & -1
  \end{pmatrix}$ \\
$[1,1,1]$ & $1$   & $-1$        & $-1$          & $-1$          & $1$           & $1$ \\
  \bottomrule
\end{tabular}

\bigskip

\noindent {\bf $\mathbf S_4$.}
The symmetric group $S_4$ has irreducible representation labeled by the partitions
$[4]$, $[3,1]$, $[2,2]$, $[2,1,1]$, $[1,1,1,1]$.
Of these, by the Schur-Weyl duality only $[4]$, $[3,1]$, and $[2,2]$ are nonvanishing, these having no more than $2$ rows.
The completely symmetric irrep $[4]$ is trivial, having value $R_{[4]}(\pi) = 1$ for all $\pi \in S_4$.
We list the remaining nonvanishing irreps below.
Again we can do variable reduction, and
Appendix B in Ref.~\cite{Watts2024relaxationsexact} together with the fact that Quantum Max Cut is real
shows that the variables $(i,j)$ and $(i,j)(k,l)$ span the space of permutations on $(\C^d)^{\ot 4}$.

\renewcommand{\arraystretch}{1.4}
\begin{xltabular}{\linewidth}{@{}lXl@{}}
\toprule
Partition & Permutation (all after $\ast$ redundant on $(\C^2)^{\ot 4}$) & Irreducible Representation \\
%
%
\midrule
$[3,1]$
& $ \id $; &
$ \phantom{-\frac{1}{2}}
\begin{pmatrix}
1 & 0 & 0 \\
0 & 1 & 0 \\
0 & 0 & 1
\end{pmatrix} $
\\
& $ (13)(24) $; &
$ -\frac{1}{3}
\begin{pmatrix}
1 & \sqrt{2} & -\sqrt{6} \\
\sqrt{2} & 2 & \sqrt{3} \\
-\sqrt{6} & \sqrt{3} & 0
\end{pmatrix} $
\\
& $ (14)(23) $; &
$ -\frac{1}{3}
\begin{pmatrix}
1 & \sqrt{2} & \sqrt{6} \\
\sqrt{2} & 2 & -\sqrt{3} \\
\sqrt{6} & -\sqrt{3} & 0
\end{pmatrix} $
\\
& $ (12)(34) $; &
$ \phantom{-}\frac{1}{3}
\begin{pmatrix}
-1 & 2 \, \sqrt{2} & 0 \\
2 \, \sqrt{2} & 1 & 0 \\
0 & 0 & -3
\end{pmatrix} $
\\
& $\ast \,\, (234) $; $ (243) $; &
$ \phantom{-}\frac{1}{6}
\begin{pmatrix}
-2 & \sqrt{2} & \sqrt{6} \\
\sqrt{2} & -1 & 2 \, \sqrt{3} \\
\sqrt{6} & 2 \, \sqrt{3} & 3
\end{pmatrix} $
\\
& $\ast \,\, (132) $; $ (123) $; &
$ \phantom{-}\frac{1}{2}
\begin{pmatrix}
2 & 0 & 0 \\
0 & -1 & 0 \\
0 & 0 & -1
\end{pmatrix} $
\\
& $\ast \,\, (143) $; $ (134) $; &
$ - \frac{1}{6}
\begin{pmatrix}
2 & -\sqrt{2} & \sqrt{6} \\
-\sqrt{2} & 1 & 2 \, \sqrt{3} \\
\sqrt{6} & 2 \, \sqrt{3} & -3
\end{pmatrix} $
\\
& $\ast \,\, (124) $; $ (142) $; &
$ \phantom{-}\frac{1}{6}
\begin{pmatrix}
-2 & -2 \, \sqrt{2} & 0 \\
-2 \, \sqrt{2} & 5 & 0 \\
0 & 0 & -3
\end{pmatrix} $
\\
& $ (34) $; &
$ \phantom{-}\frac{1}{3}
\begin{pmatrix}
-1 & 2 \, \sqrt{2} & 0 \\
2 \, \sqrt{2} & 1 & 0 \\
0 & 0 & 3
\end{pmatrix} $
\\
& $\ast \,\,  (1324) $; $ (1423) $; &
$ -\frac{1}{3}
\begin{pmatrix}
1 & \sqrt{2} & 0 \\
\sqrt{2} & 2 & 0 \\
0 & 0 & 0
\end{pmatrix} $
\\
& $ (12) $; &
$ \phantom{-\frac{1}{2}}
\begin{pmatrix}
1 & 0 & 0 \\
0 & 1 & 0 \\
0 & 0 & -1
\end{pmatrix} $
\\
& $ (23) $; &
$ \phantom{-}\frac{1}{2}
\begin{pmatrix}
2 & 0 & 0 \\
0 & -1 & \sqrt{3} \\
0 & \sqrt{3} & 1
\end{pmatrix} $
\\
& $\ast \,\, (1342) $; $ (1243) $; &
$ \phantom{-}\frac{1}{6}
\begin{pmatrix}
-2 & \sqrt{2} & \sqrt{6} \\
\sqrt{2} & -1 & -\sqrt{3} \\
\sqrt{6} & -\sqrt{3} & -3
\end{pmatrix} $
\\
& $ (14) $; &
$ -\frac{1}{6}
\begin{pmatrix}
2 & 2 \, \sqrt{2} & 2 \, \sqrt{6} \\
2 \, \sqrt{2} & -5 & \sqrt{3} \\
2 \, \sqrt{6} & \sqrt{3} & -3
\end{pmatrix} $
\\
& $ (24) $; &
$ \phantom{-}\frac{1}{6}
\begin{pmatrix}
-2 & -2 \, \sqrt{2} & 2 \, \sqrt{6}\\
-2 \, \sqrt{2} & 5 & \sqrt{3} \\
2 \, \sqrt{6}& \sqrt{3} & 3
\end{pmatrix} $
\\
& $ (13) $; &
$ \phantom{-}\frac{1}{2}
\begin{pmatrix}
2 & 0 & 0 \\
0 & -1 & -\sqrt{3} \\
0 & -\sqrt{3} & 1
\end{pmatrix} $
\\
& $\ast \,\,  (1432) $; $ (1234) $; &
$ -\frac{1}{6}
\begin{pmatrix}
2 & -\sqrt{2} & \sqrt{6} \\
-\sqrt{2} & 1 & -\sqrt{3} \\
\sqrt{6} & -\sqrt{3} & 3
\end{pmatrix} $
\\
%
%
\midrule
$[2,2]$
& $ \id $; $ (13)(24) $; $ (14)(23) $; $ (12)(34) $; &
$
\phantom{-\frac{1}{2}}
\begin{pmatrix}
1 & 0 \\
0 & 1
\end{pmatrix} $
\\
& $\ast \,\, (234) $; $ (132) $; $ (143) $; $ (124) $; $ (243) $; $ (134) $; $ (142) $; $ (123) $; &
$ -\frac{1}{2}
\begin{pmatrix}
1 & 0 \\
0 & 1
\end{pmatrix} $
\\
& $ (34) $; $ (12) $;  $ \ast \,\, (1324) $; $ (1423) $; &
$ \phantom{-\frac{1}{2}}
\begin{pmatrix}
1 & 0 \\
0 & -1
\end{pmatrix} $
\\
& $ (23) $; $ (1342) $; $ (14) $; $ (1243) $; &
$
\phantom{-}\frac{1}{2}
\begin{pmatrix}
-1 & \sqrt{3} \\
\sqrt{3} & 1
\end{pmatrix} $
\\
& $ (24) $; $ (13) $; $ \ast \,\,  (1432) $; $ (1234) $; &
$
\phantom{-}\frac{1}{2}
\begin{pmatrix}
-1 & -\sqrt{3} \\
- \sqrt{3} & 1
\end{pmatrix} $
\\
%
%
%
\bottomrule
\end{xltabular}

\bibliographystyle{alpha}

\bibliography{current_bib}
\end{document}